\documentclass[aps,pra,english,10pt,a4,nofootinbib,twocolumn,notitlepage,superscriptaddress]{revtex4-1}
\usepackage {graphicx}
\usepackage{color}
\usepackage{amsmath, amssymb,amsthm}
\usepackage{mathtools}
\usepackage{multirow}

\setlength{\belowcaptionskip}{-1pt}

\usepackage{hyperref,graphicx, amsmath, amssymb, amsthm, color, bm, bbm,dsfont}
\usepackage{cleveref}

\theoremstyle{plain}
\newtheorem{thm}{Theorem}
\crefname{thm}{theorem}{theorems}

\crefname{prop}{proposition}{propositions}
\newtheorem{cor}[thm]{Corollary}
\theoremstyle{definition}

\definecolor{nblue}{rgb}{0.2,0.2,0.7}
\definecolor{ngreen}{rgb}{0.1,0.5,0.1}
\definecolor{nred}{rgb}{0.8,0.2,0.2}
\definecolor{nblack}{rgb}{0,0,0}

\DeclareMathOperator{\tr}{\mathrm{Tr}}
\usepackage{wasysym}

\newcommand{\ds}[1]{\mathds{#1}}

\newcommand{\mc}[1]{\mathcal{#1}}

\newcommand{\bc}[1]{\bm{\mathcal{#1}}}
\newcommand{\mbb}[1]{\mathbb{#1}}

\newcommand{\la}{\langle}
\newcommand{\ra}{\rangle}
\newcommand{\lv}{\lvert}

\newcommand{\rv}{\rvert}

\newcommand{\dla}{\langle\!\langle}
\newcommand{\dra}{\rangle\!\rangle}

\newcommand{\ct}{^{\dagger}}

\newcommand{\hidden}[1]{}

\begin{document}
	
\title{{Bounding the average gate fidelity of composite channels using the unitarity}}
	\author{Arnaud Carignan-Dugas}
	\affiliation{Institute for Quantum Computing and the Department of Applied
		Mathematics, University of Waterloo, Waterloo, Ontario N2L 3G1, Canada}
	\author{Joel J. Wallman}
	\affiliation{Institute for Quantum Computing and the Department of Applied
		Mathematics, University of Waterloo, Waterloo, Ontario N2L 3G1, Canada}
	\author{Joseph Emerson}
	\affiliation{Institute for Quantum Computing and the Department of Applied
		Mathematics, University of Waterloo, Waterloo, Ontario N2L 3G1, Canada}
	\affiliation{Canadian Institute for Advanced Research, Toronto, Ontario M5G 1Z8, Canada}
\begin{abstract}
There is currently a significant need for robust and efficient methods for characterizing quantum devices. 
While there has been significant progress in this direction, there remains a crucial need to precisely determine the strength and type of errors  on 
individual gate operations, in order to assess and improve control as well as reliably bound the total error in a quantum circuit given some partial information about the errors on the components. In this work, we first
provide an optimal bound on the total fidelity of a circuit in terms of 
component fidelities, 
which can be efficiently experimentally estimated via randomized 
benchmarking. We then derive a tighter bound that applies under additional information about 
the coherence of the error, namely, the unitarity, which can also be estimated via a related experimental protocol. This 
improved bound smoothly interpolates between the worst-case quadratic and 
best-case linear scaling for composite error channels. As an application we show how our analysis substantially
improves the achievable precision on estimates of the infidelities of individual gates under 
interleaved randomized benchmarking, enabling greater precision for current experimental methods to assess and tune-up control over quantum gate operations.
\end{abstract}

\maketitle
\section{Introduction}

The output of a quantum computer will only be reliable if the total error in 
the whole computation is sufficiently small. This can be rigorously guaranteed if the 
error on the individual components (i.e., preparations, measurements and gate operations)
is sufficiently small compared to the length of the computation.   
{A very common} experimental practice \cite{ Gaebler2012,Corcoles2013,Kelly2014,Barends2014,Xia2015,Muhonen2015,Tarlton,Casparis2016,McKay2016,Sheldon2016,Takita2016,McKay2017} for estimating 
errors on gate operations
 is  randomized benchmarking (RB) of Clifford operations \cite{Emerson2005,DCEL2006}.
The experimentally measured infidelities under RB experiments have very recently been shown to give a very precise estimate of 
the average gate fidelity (hereafter simply the fidelity) of an 
error channel to the identity,
\begin{align}
F(\mc{E}) := \int {\rm d}\psi \la \psi|\mc{E}(|\psi\ra\!\la \psi|)|\psi\ra
\end{align}
under very robust and experimentally realistic conditions\cite{Magesan2011, Magesan2012b, Wallman2015, Wallman2015b, Wallman2016, Proctor2017, Wallman2017, Dugas2018,Harper2019,Dirkse2019}, 
{when expressed in a physical operational gauge \cite{Proctor2017,Wallman2017,Merkel2018,Dugas2018}}\footnote{{In Dugas et al, the physicality of the gauge is proven for $d=2$, and conjectured otherwise.}}, {resolving the concern (that RB did not reliably measure a physically meaningful fidelity) raised in \cite{Proctor2017v1,Qi2018}.} 

An important practical application of RB is interleaved RB (IRB) \cite{Magesan2012b},  a now-standard approach 
 for estimating infidelities on individual gates \cite{Gaebler2012,Corcoles2013,Kelly2014,Barends2014,Veldhorst2014,Muhonen2015,Veldhorst2015,Barends2015,Casparis2016,McKay2016,Sheldon2016,Takita2016,Takeda2016,McKay2017,McKay2017b,Nichol2017,Chan2018,Caldwell2018,Wang2018,Wang2018b,Yoneda2018,Zhang2018}, including gates that collectively generate universality \cite{Dugas2015,Cross2016,Harper2017,Proctor2018}. However this approach is subject to a systematic error that can significantly limit the precision of the estimate and often goes unreported - a problem which we address below.
As noted above, the average gate fidelity gives only very limited information about the error and error 
channels with the same fidelity on the component gate operations can lead to dramatically different total error for a
circuit composed from these gate operations. For example, the infidelity $r(\mc{E}) = 1 - F(\mc{E})$ grows linearly 
in the number of gates under purely stochastic errors (that is, errors that 
can be modeled by classical probabilities over different Pauli operators) and grows
quadratically under purely unitary errors (that is, coherent errors due to small calibrations that are common in quantum control) in the limit of small 
infidelities~\cite{Sheldon2016}. However, realistic experimental errors are 
neither purely stochastic nor purely unitary, but rather some combination 
of the two. {To adequately characterize quantum circuits, which are the result of multiple noisy operations, it is crucial to understand (and bound) how errors can accumulate given an intermediate level of coherence.} In this paper, we study 
{the impact of coherence on the fidelity of circuit constructions.}
An important {application} from our work is to provide a dramatic improvement to the achievable precision of IRB, enabling significantly more reliable experimental methods for assessing and tuning-up the individual gate operations required  for quantum computing and other applications. 

This paper is organized as follows. We first obtain strictly optimal 
upper- and lower-bounds on the total infidelity of the circuit for all parameter regimes when 
only the infidelities of the components are known. These bounds are saturated 
by unitary channels and so grow quadratically with the number of gates. 
Moreover, because our bounds are saturated, they cannot be improved without 
further knowledge about the errors. 
Because the worst-case growth of the infidelity is achieved by purely unitary 
channels, intuitively, quantifying how far an error channel is from purely unitary error 
should enable an improved bound. One such quantity is the unitarity.
Thus our second contribution in this work is a proof  
that the unitarity 
\begin{align}\label{def:unitarity}
u(\mc{E}) := \frac{d}{d-1}\int {\rm d}\psi 
\tr\mc{E}(\psi-\tfrac{1}{d}\ds{I})^2,
\end{align}
of the components, which can be estimated using a variant of RB~\cite{Wallman2015,Dugas2018gcfgen,Dirkse2019} (URB),
can be used to obtain a tighter bound on the total infidelity. This information enables a smooth interpolation between the 
quadratic growth of purely unitary errors and the linear scaling of purely 
stochastic errors. {Including the unitarity to characterize 
	circuits allows to quantitatively reason about an often omitted statement:
	elementary operations with low infidelity and highly coherent errors can rapidly compose to a worse
	circuit than a sequence of elementary operations with moderate infidelity but highly stochastic errors. Our bounds implicitly quantify how fast this can happen given the infidelity and unitarity of individual components.} Our third contribution, noted above, {goes the other way: from a composite error $\mc E_h \circ \mc E$, we bound the fidelity of one of its component $\mc E_h$. We demonstrate} an immediate practical application of this result by providing a dramatically improved  bound on the  accuracy of the estimates of gate infidelities under interleaved RB~\cite{Magesan2012b}. {This is done by substituting the estimate of the 
effective depolarizing constant of 
the individual interleaved gate $\hat{p}= p_{\text{IRB}}/p_{\text{RB}}$ by
$\hat{p}= p_{\text{IRB}}p_{\text{RB}}/p_{\text{URB}}$, which 
requires a unitarity RB (URB) experiment. {In the experiments reported in \cite{Xue2018,Yang2018}, our estimator is used to rigorously bound the 
	infidelity of individual quantum gates via \cref{eq:uni_bound}.}

\section{Noisy quantum processes}\label{app:representation}

Markovian quantum processes can be described by completely-positive and 
trace-preserving (CPTP) linear maps $\mc{E}:\mbb{D}_d\to\mbb{D}_d$ where 
$\mbb{D}_d$ is the set of density operators acting on $\mbb{C}^d$, that is, the 
set of positive-semidefinite operators with unit trace. We denote quantum 
channels using single calligraphic capital Roman letters and the composition of 
channels by multiplication for brevity, so that $\mc{A}\mc{B}(\rho) = 
\mc{A}[\mc{B}(\rho)]$. We also denote the composition of $m$ channels 
$\mc{E}_1,\ldots,\mc{E}_m$ by $\mc{E}_{1:m} = \mc{E}_1\ldots\mc{E}_m$.

Abstract quantum channels can be represented in many ways. In this paper, we 
will use the Kraus operator, $\chi$-matrix and the Liouville (or transfer 
matrix) representations. The Kraus operator and $\chi$-matrix representations 
of a quantum channel $\mc{E}$ are
\begin{align}
	\mc{E}(\rho) 
	= \sum_j A_j \rho A_j\ct
	= d\sum_{k,l\in\mbb{Z}_{d^2}} \chi_{kl}^{\mc{E}}B_k\rho B_l\ct 
\end{align}
respectively, where the $A_j$ are the Kraus operators, $\mbb{Z}_{d^2} = 
\{0,\ldots,d^2-1\}$ and $\mc{B}=\{B_0=\ds{I}_d/\sqrt{d},B_1,\ldots,B_{d^2-1}\}$ 
is a trace-orthonormal basis of $\mbb{C}^{d\times d}$ satisfying $\la 
B_j,B_k\ra :=\tr B_j\ct B_k = \delta_{j,k}$. Note that we include the 
dimensional factor in the definition of the $\chi$-matrix to be consistent with 
the standard construction in terms of unnormalized Pauli operators.

The Kraus operators can be expanded as $A_j = \sum_{k\in\mbb{Z}_{d^2}} \la 
B_k,A_j\ra B_k$ relative to $\mc{B}$. Making use of the phase freedom in the 
Kraus operators (that is, $A_j \to e^{i\theta_j}A_j$ gives the same quantum 
channel), we can set $\la B_0,A_j\ra \geq 0$ for all $j$. We can then expand 
the Kraus operators as
\begin{align}\label{eq:Kraus_expansion}
A_j 
=\sqrt{a_j d} 
\Big(\cos(\alpha_j)B_0+\sin(\alpha_j)\vec{v}_j\cdot\vec{\mc{B}}\Big)
\end{align}
where $a_j d = \la A_j, A_j\ra$, $\vec{\mc{B}} = (B_1,\ldots,B_{d^2-1})$, 
$\vec{v}_j\in\mbb{C}^{d^2-1}$ with $\|\vec{v}_j\|_2=1$, and $\alpha_j$ can be 
chosen to be in $[0,\tfrac{\pi}{2}]$ by incorporating any phase into 
$\vec{v}_j$. Substituting this expansion into the Kraus operator decomposition 
and equating coefficients with the $\chi$-matrix representation gives
\begin{align}
	\chi_{kl}^{\mc{E}} 
	= \frac{1}{d}\sum_j \la B_k,A_j\ra \!\la A_j,B_l\ra ,
\end{align}
and, in particular,
\begin{align}\label{eq:Kraus_to_chi00}
	\chi_{00}^{\mc{E}} 
	= \frac{1}{d^2}\sum_j |\tr A_j|^2 
	= \sum_j a_j \cos^2(\alpha_j).
\end{align}
Applying the trace-preserving constraint with ${\la B_j, B_k\ra = 
\delta_{j,k}}$ gives
\begin{align}\label{eq:Kraus_coefficients_TP}
	1= \frac{1}{d}\tr \sum_j A_j\ct A_j = \sum_j a_j,
\end{align}
which then implies
\begin{align}\label{eq:Kraus_to_rchi00}
	1-\chi_{00}^{\mc{E}} 
	= \sum_j a_j \sin^2(\alpha_j).
\end{align}

Alternatively, density matrices $\rho$ and effects $E$ (elements of 
positive-operator-valued measures) can be expanded with respect to $\mc{B}$ as 
$\rho = \sum_j \la B_j,\rho\ra B_j$ and $E = \sum_j \la B_j,E\ra B_j$. The 
Liouville representations of $\rho$ and $E$ are the column vector $|\rho\dra$ 
and row vector $\dla E| = |E\dra \ct$ of the corresponding expansion 
coefficients. The Born rule is then $\la E,\rho\ra =\dla E|\rho \dra$. The 
Liouville representation of a channel $\mc{E}$ is the unique matrix $\bc{E}$ 
such that $\bc{E}|\rho\dra = |\mc{E}(\rho)\dra$, which can be written as 
$\bc{E} = \sum_j |\mc{E}(B_j)\dra\!\dla B_j|$. With $B_0 = \ds{I}_d/\sqrt{d}$, 
the Liouville representation of any CPTP map can be expressed in block form as
\begin{align}\label{eq:block}
	\bc{E} 
	= \left(\begin{array}{cc} 1 & 0 \\ \bc{E}_{\rm n} & \bc{E}_{\rm u} 
	\end{array}\right)
\end{align} 
where $\bc{E}_{\rm n}\in\mbb{C}^{d^2-1\time 1}$ is the non-unital vector and 
$\bc{E}_{\rm u}\in\mbb{C}^{d^2-1\times d^2-1}$ is the unital block. The 
unitarity and {effective depolarizing constant} can be written as
\begin{align}\label{eq:Liouville_parameters}
	u(\mc{E}) &= \frac{\tr\bc{E}_{\rm u}\ct \bc{E}_{\rm u}}{d^2-1}= \frac{\| \bc{E}_{\rm u}\|_F^2}{d^2-1}\notag\\
	p(\mc{E}) &= \frac{\tr\bc{E}_{\rm u}}{d^2-1}
\end{align}
with respect to the Liouville representation~\cite{Wallman2015,Kimmel2014}. 

The {effective depolarizing constant} $p(\mc{E})$ and $\chi_{00}$ are linear functions of the 
fidelity that can be more convenient to work with. The relations between the 
various linear functions of the fidelity used in this paper are	tabulated in 
\cref{tab:conversion}.

\begin{table}[t] 
\begin{center}\renewcommand*{\arraystretch}{2.6}
\begin{tabular}{| c ||c |c| c| c| } 
	\hline
	& $F$ & $r$ & $p$ & $\chi_{00}$\\
	\hline 
	\hline
	$F$ & $F $& $1-r$& 
	$\dfrac{(d-1)p + 1}{d}$ & $\dfrac{d\chi_{00} + 1}{d+1}$ \\
	\hline
	$r$ & $1-F $ & $r$& $\dfrac{d-1}{d}(1-p)$ & $\dfrac{d}{d+1} 
	(1-\chi_{00} )$\\
	\hline
	\rule{0pt}{3ex}  
	$p$ & $\dfrac{dF - 1}{d-1}$ & $1-\dfrac{d}{d-1}r$& $p$ & 
	$\dfrac{d^2\chi_{00} - 1}{d^2-1}$\\
	\hline
	$\chi_{00}$ & $\dfrac{(d+1)F - 1}{d}$ & $1-\dfrac{d+1}{d}r$& 
	$\dfrac{(d^2-1)p + 1}{d^2}$ & $\chi_{00}$ \\
	\hline
\end{tabular}
\caption{Linear relations between the fidelity $(F)$, the infidelity 
$(r)$, the {effective depolarizing constant} $(p)$, and $\chi_{00}$. }	\label{tab:conversion}
\end{center}
\end{table}

\section{Composite infidelities in terms of component infidelities}

We now prove that unitary error processes lead to the fastest growth in the 
total infidelity of a circuit. In particular, we obtain strict bounds on the 
infidelity of a composite error process in terms of the infidelities of the 
components and show that the bounds are saturated by unitary processes for all 
even-dimensional systems. 

We first obtain a bound on the infidelity of the composition of two channels 
that strictly improves on the corresponding bound of Ref.~\cite{Kimmel2014}. We 
also show that the improved bound is saturated for all values of the relevant 
variables. Therefore \cref{thm:chi00_bound} gives the optimal bounds on the 
infidelity of the composite in terms of only the infidelities of the 
components, and so obtaining a more precise estimate of the composite 
infidelity requires further information about the errors. We then obtain an 
upper bound on the infidelity of the composition of $m$ channels that inherits 
the tightness of the bound for the composition of two channels.

We present the following bounds in terms of the $\chi$ matrix, though the 
results can be rewritten in terms of other linear functions of the infidelity 
using \cref{tab:conversion}. For example, consider the composition of $m$ 
noisy operations $\mc{X}_i$ with equal infidelity, that is, $r(\mc X_i)= r$. 
Then by \cref{thm:chi00_bound_n} and \cref{tab:conversion}, the total 
infidelity of the composite process is at most
	\begin{align}\label{eq:quadratic}
	r( \mc X_{1:m} ) \leq m^2 r+ O(m^4r^{2}),
	\end{align}
which exhibits the expected quadratic scaling with $m$. Moreover, this upper 
bound is saturated and so cannot be improved without additional information 
about the errors.

\begin{thm}\label{thm:chi00_bound}
	For any two quantum channels $\mc{X}$ and $\mc{Y}$, 
	\begin{align}\label{eq:chi00_bound}
	\Big|\chi_{00}^{\mc{X}\mc{Y}} - \chi_{00}^{\mc{X}} \chi_{00}^{\mc{Y}} 
	-(1-\chi_{00}^{\mc{X}})(1-\chi_{00}^{\mc{Y}}) \Big| 
	\notag \\
	\leq 2\sqrt{\chi_{00}^{\mc{X}}\chi_{00}^{\mc{Y}}(1-\chi_{00}^{\mc{X}}) 
		(1-\chi_{00}^{\mc{Y}})}.
	\end{align}
	Furthermore, for all even dimensions and all values of 
	$\chi_{00}^{\mc{X}}$, $\chi_{00}^{\mc{Y}}$, there exists a pair of channels 
	$\mc{X}$ and $\mc{Y}$ saturating both signs of the above inequality.
\end{thm}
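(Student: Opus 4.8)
The plan is to work directly with the Kraus expansion \eqref{eq:Kraus_expansion}. Write the Kraus operators of $\mc{X}$ as $A_i=\sqrt{a_id}\,(\cos\alpha_i B_0+\sin\alpha_i V_i)$ and of $\mc{Y}$ as $C_k=\sqrt{b_kd}\,(\cos\beta_k B_0+\sin\beta_k W_k)$, where $V_i,W_k$ are traceless operators of unit Frobenius norm. Since the Kraus operators of $\mc{X}\mc{Y}$ are the products $A_iC_k$ and $B_0=\ds{I}_d/\sqrt{d}$ is the only non-traceless basis element, the cross terms $\tr(B_0W_k)$ and $\tr(V_iB_0)$ vanish, so that $\tr(A_iC_k)=\sqrt{a_ib_k}\,d\,[\cos\alpha_i\cos\beta_k+\sin\alpha_i\sin\beta_k\,t_{ik}]$ with $t_{ik}:=\tr(V_iW_k)=\langle V_i\ct,W_k\rangle$; Cauchy--Schwarz gives $|t_{ik}|\le 1$. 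Substituting into $\chi_{00}^{\mc{X}\mc{Y}}=d^{-2}\sum_{ik}|\tr(A_iC_k)|^2$ and using \eqref{eq:Kraus_to_chi00} and \eqref{eq:Kraus_to_rchi00} to identify $\chi_{00}^{\mc{X}}=\sum_i a_i\cos^2\alpha_i$ and $1-\chi_{00}^{\mc{X}}=\sum_ia_i\sin^2\alpha_i$ (and likewise for $\mc{Y}$), one obtains the exact identity
\begin{align}
&\chi_{00}^{\mc{X}\mc{Y}}-\chi_{00}^{\mc{X}}\chi_{00}^{\mc{Y}}-(1-\chi_{00}^{\mc{X}})(1-\chi_{00}^{\mc{Y}})\notag\\
&=2\sum_{ik}a_ib_k\cos\alpha_i\cos\beta_k\sin\alpha_i\sin\beta_k\,\mathrm{Re}\,t_{ik}\notag\\
&\quad+\sum_{ik}a_ib_k\sin^2\alpha_i\sin^2\beta_k\,(|t_{ik}|^2-1).
\end{align}
The ``linear'' term has modulus at most $2(\sum_ia_i\cos\alpha_i\sin\alpha_i)(\sum_kb_k\cos\beta_k\sin\beta_k)$, which by Cauchy--Schwarz is bounded by $2\sqrt{\chi_{00}^{\mc{X}}(1-\chi_{00}^{\mc{X}})\chi_{00}^{\mc{Y}}(1-\chi_{00}^{\mc{Y}})}$, while the ``quadratic'' term is non-positive because $|t_{ik}|\le1$. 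This immediately yields the upper inequality in \eqref{eq:chi00_bound}.

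For a unified treatment I would recast the identity geometrically as $\chi_{00}^{\mc{X}\mc{Y}}=\|\vec{P}+\vec{T}\|_2^2$, where $\vec{P}$ and $\vec{T}$ have components $\sqrt{a_ib_k}\cos\alpha_i\cos\beta_k$ and $\sqrt{a_ib_k}\sin\alpha_i\sin\beta_k\,t_{ik}$, so that $\|\vec{P}\|_2=\sqrt{\chi_{00}^{\mc{X}}\chi_{00}^{\mc{Y}}}$ and $\|\vec{T}\|_2\le\sqrt{(1-\chi_{00}^{\mc{X}})(1-\chi_{00}^{\mc{Y}})}$. The triangle inequality reproduces the upper bound, and the reverse triangle inequality $\|\vec{P}+\vec{T}\|_2\ge\big|\,\|\vec{P}\|_2-\|\vec{T}\|_2\big|$ gives the lower bound after squaring. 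I expect this reverse inequality to be the main obstacle: it delivers the claimed bound cleanly only when $\|\vec{P}\|_2\ge\|\vec{T}\|_2$, and establishing it in the complementary (low-fidelity) regime requires exploiting the fact that the $t_{ik}$ are genuine overlaps $\tr(V_iW_k)$ of a fixed family of unit operators---equivalently, controlling the correlation between $\mathrm{Re}\,t_{ik}$ in the linear term and $|t_{ik}|^2$ in the quadratic term---rather than treating each $t_{ik}$ as an independent point of the unit disk.

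Finally, for the saturation claim I would exhibit commuting unitary channels in each even dimension. Fix a balanced sign operator $\tilde{V}=\mathrm{diag}(1,\dots,1,-1,\dots,-1)$ with $d/2$ entries of each sign, so that $\tilde{V}$ is traceless, Hermitian, and $\tilde{V}^2=\ds{I}_d$; this is precisely where even $d$ is used, since an odd number of $\pm1$ eigenvalues cannot sum to zero. Taking $\mc{X}(\rho)=U_\alpha\rho U_\alpha\ct$ and $\mc{Y}(\rho)=U_{\pm\beta}\rho U_{\pm\beta}\ct$ with $U_\theta=e^{i\theta\tilde{V}}=\cos\theta\,\ds{I}_d+i\sin\theta\,\tilde{V}$ gives $\chi_{00}^{\mc{X}}=\cos^2\alpha$ and $\chi_{00}^{\mc{Y}}=\cos^2\beta$, which may be tuned to any prescribed values in $[0,1]$. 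Because the two unitaries commute, $U_\alpha U_{\pm\beta}=e^{i(\alpha\pm\beta)\tilde{V}}$, so that $\chi_{00}^{\mc{X}\mc{Y}}=\cos^2(\alpha\pm\beta)=\big(\sqrt{\chi_{00}^{\mc{X}}\chi_{00}^{\mc{Y}}}\mp\sqrt{(1-\chi_{00}^{\mc{X}})(1-\chi_{00}^{\mc{Y}})}\big)^2$, and the choice of relative sign saturates the $-$ and $+$ branches of \eqref{eq:chi00_bound} respectively.
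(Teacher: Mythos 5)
Your derivation of the upper half of \eqref{eq:chi00_bound} and your saturating family are both correct and essentially coincide with the paper's argument: the paper likewise takes the product Kraus decomposition, writes $\chi_{00}^{\mc{X}\mc{Y}}=\sum_{j,k}x_jy_k\lv\cos\xi_j\cos\theta_k+\beta_{j,k}\sin\xi_j\sin\theta_k\rv^2$ with $|\beta_{j,k}|\le 1$, bounds term by term, and then applies Cauchy--Schwarz exactly as you do; its saturating channels $\mc{U}(\phi)\otimes\mc{I}_{d/2}$ with $U(\phi)=e^{i\phi}\outpr{0}{0}+e^{-i\phi}\outpr{1}{1}$ are your $e^{i\phi\tilde{V}}$ with $\tilde{V}=Z\otimes\mbb{I}_{d/2}$.

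The genuine gap is the lower half of the two-sided bound, which you explicitly leave open. The paper closes it by asserting the per-term inequality $\bigl\lv|\alpha+\beta\gamma|^2-|\alpha|^2-|\gamma|^2\bigr\rv\le 2|\alpha\gamma|$ for $|\beta|\le 1$ (its \eqref{eq:triangle_bounds}) and summing, and you could in principle adopt that route --- but your suspicion about this direction is well founded. That per-term bound is obtained by squaring $|\alpha|-|\gamma|\le|\alpha+\beta\gamma|$, which is only legitimate when $|\alpha|\ge|\gamma|$; taking $\alpha=\beta=0$, $\gamma=1$ gives $1\le 0$. Correspondingly, the lower half of \eqref{eq:chi00_bound} fails without a high-fidelity restriction: for $\mc{X}(\rho)=Z\rho Z$ and $\mc{Y}(\rho)=X\rho X$ on a qubit one has $\chi_{00}^{\mc{X}}=\chi_{00}^{\mc{Y}}=\chi_{00}^{\mc{X}\mc{Y}}=0$, while the lower bound demands $\chi_{00}^{\mc{X}\mc{Y}}\ge(\sqrt{0}-\sqrt{1})^2=1$. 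The per-term argument (and hence the theorem) does go through whenever $2\cos\xi_j\cos\theta_k\ge(1+|\beta_{j,k}|)\sin\xi_j\sin\theta_k$ for all $j,k$ --- in particular whenever $\xi_j+\theta_k\le\tfrac{\pi}{2}$ --- which is the regime matching the hypothesis \eqref{eq:composite_condition} of \cref{thm:chi00_bound_n} and the high-fidelity setting the paper actually needs. So your upper bound and saturation stand as written; the missing lower bound cannot be completed as stated, only under such a restriction, and the controlled-correlation argument you gesture at in the low-fidelity regime cannot succeed because the claim there is false.
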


\begin{proof}
Let $\mc{X}(\rho) = \sum_j X_j \rho X_j\ct$ and $\mc{Y}(\rho) = \sum_j Y_j 
\rho Y_j\ct$ be Kraus operator decompositions of $\mc{X}$ and $\mc{Y}$ 
respectively. From \cref{eq:Kraus_expansion}, we can expand the Kraus 
operators as 
\begin{align}
	X_j
	&= \sqrt{x_j d}\left(\cos(\xi_j)B_1 + 
	\sin(\xi_j)\vec{u}_j\cdot\vec{\mc{B}}\right) \notag\\
	Y_j
	&= \sqrt{y_j d}\left(\cos(\theta_j)B_1 + 
	\sin(\theta_j)\vec{v}_j\cdot\vec{\mc{B}}\right)
\end{align}
where $\vec{u}_j,\vec{v}_j\in\mbb{C}^{d^2-1}$ are unit vectors and 
$\xi_j,\theta_j\in[0,\tfrac{\pi}{2}]$. Then a Kraus operator decomposition 
of $\mc{X}\mc{Y}$ is 
\begin{align}
	\mc{X}\mc{Y}(\rho) 
	= \sum_{j,k} X_j Y_k \rho Y_k\ct X_j\ct
\end{align}
and so, by \cref{eq:Kraus_to_chi00},
\begin{align}\label{eq:composite_chi00}
	\chi_{00}^{\mc{X}\mc{Y}} 
	= \sum_{j,k} x_j y_k\lv\cos \xi_j \cos \theta_k + 
	\beta_{j,k}\sin\xi_j \sin\theta_k\rv^2,
\end{align}
where $\beta_{j,k} = \vec{u}_j\cdot\vec{v}_k$ and we have chosen the basis 
$\mc{B}$ to be Hermitian so that $\tr B_j\ct B_k =\tr B_j B_k = \delta_{j,k}$. 
By the triangle and reverse-triangle inequalities,
\begin{align}
	|\alpha|-|\gamma|
	\leq|\alpha + \beta\gamma|
	\leq |\alpha| + |\gamma|
\end{align}
for any $\alpha,\beta,\gamma\in\mbb{C}$ such that $|\beta|\leq 1$, which 
then 
implies
\begin{align}\label{eq:triangle_bounds}
	\Bigl\lv |\alpha + \beta\gamma|^2 - |\alpha|^2 - |\gamma|^2 \Bigr\rv
	\leq 2|\alpha\gamma|.
\end{align}
From \cref{eq:Kraus_to_chi00} and \eqref{eq:Kraus_to_rchi00},
\begin{align}
	\sum_{j,k} x_j y_k|\cos(\xi_j)\cos(\theta_k)|^2
	&= \chi_{00}^{\mc{X}}\chi_{00}^{\mc{Y}} \notag\\
	\sum_{j,k} x_j y_k|\sin(\xi_j)\sin(\theta_k)|^2
	&= (1-\chi_{00}^{\mc{X}})(1-\chi_{00}^{\mc{Y}} ),
\end{align}
so by \cref{eq:triangle_bounds},
\begin{align}\label{eq:intermediate}
	\Big|\chi_{00}^{\mc{X}\mc{Y}} - \chi_{00}^{\mc{X}} 
	\chi_{00}^{\mc{Y}} 
	-(1-\chi_{00}^{\mc{X}})(1-\chi_{00}^{\mc{Y}}) \Big| 
	\notag \\
	\leq \sum_{j,k}2x_j 
	y_k\cos(\xi_j)\cos(\theta_k)\sin(\xi_j)\sin(\theta_k),
\end{align}
using $|\beta_{j,k}|\leq 1$ and the non-negativity of the 
trigonometric functions over $[0,\tfrac{\pi}{2}]$. Note that the above 
inequalities are saturated if and only if $\beta_{j,k}=\pm1$.

By the Cauchy-Schwarz inequality with the fact that all the quantities are 
non-negative,
\begin{align*}
\sum_j x_j \sin(\xi_j) \cos(\xi_j)
	&\leq \sqrt{\sum_j x_j \sin^2(\xi_j)}\sqrt{\sum_j x_j \cos^2(\xi_j)} 
	\notag\\
	&\leq \sqrt{(1-\chi_{00}^{\mc{X}})\chi_{00}^{\mc{X}}},
\end{align*}
where the second line follows from \cref{eq:Kraus_to_chi00} and 
\cref{eq:Kraus_to_rchi00}. Applying this upper bound for $\mc{X}$ and the 
corresponding one for $\mc{Y}$ to \cref{eq:intermediate} gives the inequality 
in the theorem.

To see that both signs of the inequality are saturated for all values of 
$\chi_{00}^{\mc{X}},\chi_{00}^{\mc{Y}}$ in even dimensions, let $\mc{X}= 
\mc{U}(\phi)\otimes \mc{I}_{d/2}$ and $\mc{Y} = \mc{U}(\theta)\otimes 
\mc{I}_{d/2}$ where 
\begin{align}
	U(\phi) = e^{i\phi}|0\ra \la 0| + e^{-i\phi}|1\ra \la 1|.
\end{align}
By \cref{eq:Kraus_to_chi00}, $\chi_{00}^{\mc{U}(\phi)\otimes 
\mc{I}_d/2} = 
\chi_{00}^{U(\phi)} = \cos^2\phi$. As $\mc{X}\mc{Y} = 
\mc{U}(\phi+\theta)\otimes\mc{I}_{d/2}$, some trivial trigonometric 
manipulations give
\begin{align}
	&\chi_{00}^{\mc{X}\mc{Y}} - \chi_{00}^{\mc{X}} \chi_{00}^{\mc{Y}} 
	-(1-\chi_{00}^{\mc{X}})(1-\chi_{00}^{\mc{Y}}) \notag \\
	&= - 2\cos\phi \sin\phi \cos\theta \sin\theta \notag\\
	&= -2\sqrt{\chi_{00}^{\mc{X}}\chi_{00}^{\mc{Y}}(1-\chi_{00}^{\mc{X}}) 
		(1-\chi_{00}^{\mc{Y}})} {\rm sign}(\sin 2\phi \sin 2\theta) ,
\end{align}
which saturates the lower bound if the sign function is positive and the 	
upper bound if it is negative.
\end{proof}	

\begin{cor}\label{thm:chi00_bound_n}
For any $m$ quantum channels $\mc{X}_i$ such that
\begin{align}\label{eq:composite_condition}
\sum_{i=1}^m \arccos\sqrt{ \chi_{00}^{\mc{X}_i}} \leq \frac{\pi}{2},
\end{align}
the $\chi_{00}$ element of the composite channel satisfies
\begin{align} \label{eq:composite_bound}
\chi_{00}^{ \mc{X}_{1:m}} \geq \cos^2 \left( \sum_{i=1}^m 
\arccos\sqrt{ \chi_{00}^{\mc{X}_i}}\right).
\end{align}
Furthermore, this bound is saturated for all even dimensions and all values of 
the $\chi_{00}^{\mc{X}_i}$ satisfying \cref{eq:composite_condition}.
\end{cor}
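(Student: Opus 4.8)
The plan is to prove the corollary by induction on $m$, using \cref{thm:chi00_bound} for both the base case and the inductive step after passing to angular variables. For each channel set $\phi_i = \arccos\sqrt{\chi_{00}^{\mc{X}_i}}\in[0,\tfrac{\pi}{2}]$, so that $\chi_{00}^{\mc{X}_i}=\cos^2\phi_i$ and $1-\chi_{00}^{\mc{X}_i}=\sin^2\phi_i$. The first step is to rewrite the lower-bound half of \cref{eq:chi00_bound} in these variables: for two channels with angles $\phi,\theta$ the theorem gives
\[
\chi_{00}^{\mc{X}\mc{Y}} \geq \cos^2\phi\cos^2\theta + \sin^2\phi\sin^2\theta - 2\cos\phi\sin\phi\cos\theta\sin\theta,
\]
and the right-hand side equals $\cos^2(\phi+\theta)$ by the angle-addition formula. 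This is exactly the $m=2$ instance of \cref{eq:composite_bound}, which establishes the base of the induction (the case $m=1$ being the trivial identity $\chi_{00}^{\mc{X}_1}=\cos^2\phi_1$).

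For the inductive step I would group the composite as $\mc{X}_{1:m}=\mc{X}_1\mc{X}_{2:m}$ and apply the two-channel bound to this pair. Writing $\psi=\arccos\sqrt{\chi_{00}^{\mc{X}_{2:m}}}$, the $m=2$ result yields $\chi_{00}^{\mc{X}_{1:m}}\geq\cos^2(\phi_1+\psi)$. The induction hypothesis, available because $\sum_{i=2}^m\phi_i\leq\sum_{i=1}^m\phi_i\leq\tfrac{\pi}{2}$, gives $\chi_{00}^{\mc{X}_{2:m}}\geq\cos^2(\sum_{i=2}^m\phi_i)$. Since $t\mapsto\arccos\sqrt{t}$ is decreasing and $\sum_{i=2}^m\phi_i$ lies in $[0,\tfrac{\pi}{2}]$, this implies $\psi\leq\sum_{i=2}^m\phi_i$, hence $\phi_1+\psi\leq\sum_{i=1}^m\phi_i\leq\tfrac{\pi}{2}$. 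Because $\cos^2$ is monotonically decreasing on $[0,\tfrac{\pi}{2}]$, I can then conclude $\cos^2(\phi_1+\psi)\geq\cos^2(\sum_{i=1}^m\phi_i)$, which chains with the previous inequality to give \cref{eq:composite_bound}.

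The step I expect to demand the most care is the bookkeeping of the angular domain: the two-channel lower bound $\chi_{00}^{\mc{X}\mc{Y}}\geq\cos^2(\phi+\theta)$ is only useful while the accumulated angle stays in $[0,\tfrac{\pi}{2}]$, since both $\arccos\sqrt{\cdot}$ and $\cos^2$ lose monotonicity outside this range and $\cos^2(\phi+\theta)$ would otherwise begin to increase again. Condition \cref{eq:composite_condition} is precisely what guarantees that every partial sum $\phi_1+\psi$ arising in the induction remains in $[0,\tfrac{\pi}{2}]$, so the two monotonicity steps are justified at each stage. Finally, for saturation I would reuse the construction from \cref{thm:chi00_bound}: taking $\mc{X}_i=\mc{U}(\phi_i)\otimes\mc{I}_{d/2}$ with $\phi_i=\arccos\sqrt{\chi_{00}^{\mc{X}_i}}$, the composite is $\mc{X}_{1:m}=\mc{U}(\sum_{i=1}^m\phi_i)\otimes\mc{I}_{d/2}$, whose $\chi_{00}$ element is $\cos^2(\sum_{i=1}^m\phi_i)$, attaining \cref{eq:composite_bound} with equality for every admissible choice of the $\chi_{00}^{\mc{X}_i}$.
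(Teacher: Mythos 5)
Your proposal is correct and follows essentially the same route as the paper: both pass to the angular variables $\arccos\sqrt{\chi_{00}}$, recast the lower bound of \cref{thm:chi00_bound} as subadditivity of these angles under composition, induct on $m$, and invoke \cref{eq:composite_condition} only to justify the final monotonicity step of $\cos^2$ on $[0,\tfrac{\pi}{2}]$, with the same tensor-product unitary construction for saturation. The only cosmetic difference is that the paper first proves the unconditional angle-subadditivity $\arccos\sqrt{\chi_{00}^{\mc{X}_{1:m}}}\leq\sum_i\arccos\sqrt{\chi_{00}^{\mc{X}_i}}$ and applies the domain condition once at the end, whereas you carry the condition through each inductive step.
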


\begin{proof}
We can rewrite the lower bound in \cref{eq:chi00_bound} as
\begin{align}
\sqrt{\chi_{00}^{\mc{X} \mc{Y}}}  \geq 
\sqrt{\chi_{00}^{\mc{X}}}\sqrt{\chi_{00}^{\mc{Y}}}-\sqrt{1-\chi_{00}^{\mc{X}}}
\sqrt{1-\chi_{00}^{\mc{Y}}}.
\end{align}
Writing $\sqrt{\chi_{00}}=\cos(\arccos\sqrt{\chi_{00}})$ and 
$\sqrt{1-\chi_{00}}=\sin(\arccos\sqrt{\chi_{00}})$ and using standard 
trigonometric identities, the above becomes
\begin{align}
\arccos\sqrt{\chi_{00}^{\mc{X} \mc{Y}}}  
\leq  \arccos\sqrt{\chi_{00}^{\mc{X}}} + \arccos\sqrt{\chi_{00}^{\mc{Y}}},
\end{align}
taking note to change the direction of the inequality when taking the 
$\arccos$, which follows from \cref{eq:composite_condition}. By induction, 
we have
\begin{align}
\arccos \left( \sqrt{\chi_{00}^{\mc{X}_{1:m}}} \right) 
&\leq  \sum_i \arccos\left(\sqrt{\chi_{00}^{\mc{X}_i}}\right)
\end{align}
for any set of $m$ channels $\mc{X}_i$. Taking the cosine and squaring gives 
the bound in \cref{eq:composite_bound}. The saturation follows directly from 
the saturation of \cref{eq:chi00_bound}.
\end{proof}	
{A way to intuitively think about \cref{eq:composite_bound} goes as follows: ``the worst possible fidelity of a composition
is obtained through a coherent (unitary) buildup''. Indeed, the trigonometric form of the inequality reflects this coherent nature.}

\section{Improved bounds on the infidelity using the unitarity}\label{sec:unitarity_improvements}

The bounds in \cref{thm:chi00_bound} and \cref{thm:chi00_bound_n} are 
tight for general channels if only the infidelity (or, equivalently, 
$\chi_{00}$) is known. In particular, from \cref{eq:quadratic}, the 
infidelity increases at most quadratically in $m$ (to lowest order in 
$r$). However, the examples that saturate the bounds are all unitary 
channels. If, on the other hand, the error model is a depolarizing channel 
\begin{align}
\mc{D}_p(\rho) = p\rho + \frac{(1-p)}{d}\ds{I}_d,
\end{align}
or a Pauli channel (that is, a channel with a diagonal $\chi$ matrix with 
respect to the Pauli basis), then the infidelity increases at most linearly in 
$m$ to lowest order, that is 
\begin{align}\label{eq:linear}
r(\mc X_{1:m}) \leq mr + O(m^2r^2).
\end{align}
The intermediate regime between Pauli errors and unitary errors can be quantified 
via the unitarity~\cite{Wallman2015}. In particular, we define the (positive) 
coherence angle to be
\begin{align}\label{eq:coherenceangle}
\theta(\mc{E}) = \arccos \left( p(\mc E )/\sqrt{u(\mc E)}\right) .
\end{align}
As $u(\mc{E})\leq 1$ with equality if and only if $\mc{E}$ is unitary,
$\theta (\mc{E}) \in [0,\arccos{p(\mc{E})}]$ and
\begin{align}\label{eq:angle_cases}
\theta(\mc{E}) = \begin{cases}
0 & \mbox{ iff } \mc{E} \mbox{ is depolarizing,} \\
 O(r) & \mbox{ if } \mc{E} \mbox{ is Pauli,} \\
\arccos p(\mc{E}) = O (\sqrt{r}) & \mbox{ iff } \mc{E} \mbox{ is unitary.}
\end{cases}
\end{align}
That is, $\theta(\mc{E})$ quantifies the intermediate regime between Pauli and 
unitary errors for an isolated error process.

We now show that combining the coherence angle and the infidelity enables 
improved bounds on the growth of the infidelity. For example, for any $m$ 
unital channels, or for any $m$ single qubit operations $\mc{X}_i$, with equal 
infidelity $r(\mc X_i)= r$ and coherence 
angles $\theta(\mc{X}_i) = \theta$, the total infidelity is at most
\begin{align}\label{eq:intermediate_regime}
r( \mc X_{1:m}) \leq m\left(r-\frac{(d-1)\theta^2}{2d}\right) +m^2 
\frac{(d-1)\theta^2}{2d}
\end{align}
plus higher-order terms in $r$ and $\theta^2$ by 
\cref{eq:uni_bound_fold_unital}. For Pauli errors, $\theta^2=O(r^2)$, so we 
recover \cref{eq:linear}. Conversely, for unitary errors $(d-1) 
\theta^2=2dr+O(r^2)$, so we recover \cref{eq:quadratic} in such regime. 
Moreover, the above bound is saturated (to the appropriate order) in even 
dimensions by channels of the form 
\begin{align}\label{eq:saturation_cor}
\bc X_i =\left(\begin{array}{cccc}
1 & 0 & 0 & 0 \\
0 & \gamma \cos\theta (\mc X_i)  & -	\gamma\sin\theta(\mc X_i) & 0 \\ 
0 & \gamma\sin\theta(\mc X_i) &	\gamma\cos\theta(\mc X_i) & 0 \\
0 & 0 & 0 & \lambda
\end{array} \right)  \otimes \mbb I_{d^2/2}~.
\end{align}
These include the unital action of single qubit amplitude damping and dephasing channels combined 
with a unitary evolution around the dampening/dephasing axis. {The unitary factor is parameterized by the coherence angle: \mbox{$Z_\theta = \exp{i 2 \theta Z}$} (hence the ``coherence'' qualifier). {In this sense, the coherence angle portays the allowed amount of rotation in Bloch space, as opposed to contractions (quantified by $\gamma,~\lambda$ in our saturation example) due to decoherent effects.}}

\Cref{thm:unitarity_bound,thm:uni_bound_fold} result from more general matrix inequalities that 
we prove in \cref{sec:proofs}. We apply the inequalities to the unital block of 
the Liouville representation from \cref{eq:block}, and substitute the 
expressions for the {effective depolarizing constant} and the unitarity from 
\cref{eq:Liouville_parameters}. For \cref{thm:uni_bound_fold}, we also use 
results from \cite{GarciaPerez2006}, which state that the maximal singular 
value of the unital block is upper-bounded by $\sqrt{\frac{d}{2}}$ for general 
channels and $1$ for unital channels.

\begin{thm}\label{thm:unitarity_bound}
	For any two quantum channels $\mc{X}$ and $\mc{Y}$,
	\begin{align}\label{eq:unitarity_bound}
	\cos[\theta (\mc X) + \theta(\mc Y)]\leq \frac{p(\mc X \mc 
	Y)}{\sqrt{u(\mc X) u(\mc Y)}} &\leq \cos[\theta (\mc X) - \theta(\mc Y)].
	\end{align} 
\end{thm}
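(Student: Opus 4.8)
The plan is to translate the statement into a purely linear-algebraic inequality about the unital blocks and then prove that inequality by splitting each block into its trace part and its traceless part in the Hilbert--Schmidt space. First I would use the block form \eqref{eq:block}: multiplying the two block matrices shows that $\bc{X}\bc{Y}$ has unital block $\bc{X}_{\rm u}\bc{Y}_{\rm u}$, so setting $A=\bc{X}_{\rm u}$ and $B=\bc{Y}_{\rm u}$ (real $N\times N$ matrices with $N=d^2-1$, since a Hermiticity-preserving map has a real Liouville representation in a Hermitian basis) and substituting \eqref{eq:Liouville_parameters} gives
\begin{align}
\frac{p(\mc X\mc Y)}{\sqrt{u(\mc X)u(\mc Y)}}=\frac{\tr(AB)}{\|A\|_F\|B\|_F},\qquad \cos\theta(\mc X)=\frac{\tr A}{\sqrt{N}\,\|A\|_F},
\end{align}
with the analogous identity for $\mc Y$. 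Thus it suffices to bound the cosine of the Hilbert--Schmidt angle between $A\ct$ and $B$ (note $\tr(AB)=\la A\ct,B\ra$) between $\cos[\theta(\mc X)+\theta(\mc Y)]$ and $\cos[\theta(\mc X)-\theta(\mc Y)]$.

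The key observation is that $\cos\theta(\mc X)=\la\hat{\ds I},A\ct\ra/\|A\|_F$ with $\hat{\ds I}=\ds I_N/\sqrt N$ is exactly the cosine of the angle between $A\ct$ and the normalized identity (it equals the angle for $A$ as well, since for real $A$ we have $\tr A\ct=\tr A$ and $\|A\ct\|_F=\|A\|_F$). I would therefore decompose the two normalized matrices into their components along and orthogonal to $\hat{\ds I}$,
\begin{align}
\frac{A\ct}{\|A\|_F}=\cos\theta(\mc X)\,\hat{\ds I}+\sin\theta(\mc X)\,\hat a,\qquad \frac{B}{\|B\|_F}=\cos\theta(\mc Y)\,\hat{\ds I}+\sin\theta(\mc Y)\,\hat b,
\end{align}
where $\hat a,\hat b$ are unit-norm traceless matrices orthogonal to $\hat{\ds I}$; the coefficient of the perpendicular part is $+\sin\theta$ precisely because $\theta(\mc E)\in[0,\pi]$ forces $\sin\theta\geq0$.

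Taking the Hilbert--Schmidt inner product and using $\|\hat{\ds I}\|_F=1$ together with $\la\hat{\ds I},\hat a\ra=\la\hat{\ds I},\hat b\ra=0$ kills the cross terms, leaving
\begin{align}
\frac{\tr(AB)}{\|A\|_F\|B\|_F}=\cos\theta(\mc X)\cos\theta(\mc Y)+\sin\theta(\mc X)\sin\theta(\mc Y)\,\la\hat a,\hat b\ra.
\end{align}
Since $\sin\theta(\mc X),\sin\theta(\mc Y)\geq0$ and the Cauchy--Schwarz inequality gives $\la\hat a,\hat b\ra\in[-1,1]$, substituting the two extreme values $\pm1$ and applying the cosine addition formula yields the claimed two-sided bound, with saturation exactly when $\hat a=\pm\hat b$.

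The main obstacle I anticipate is not the algebra but pinning down the reality and range conditions that make this geometric picture rigorous: one must confirm that $\tr(AB)$ is genuinely real and lies in $[-1,1]$ (guaranteed by working in a Hermitian operator basis, where $A,B$ are real) and that the coherence angles lie in $[0,\pi]$, which is what forces $\sin\theta\geq0$ and hence fixes the correct signs of both inequalities; the degenerate case $u(\mc X)u(\mc Y)=0$ must be excluded so the normalizations are defined. Carrying this out as a self-contained statement about three unit vectors related by the angle-addition (equivalently, spherical-triangle) inequality, independent of the quantum-channel interpretation, is exactly the general matrix inequality deferred to \cref{sec:proofs}.
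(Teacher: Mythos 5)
Your proof is correct and follows essentially the same route as the paper: you reduce to the unital blocks $\bc{X}_{\rm u}\bc{Y}_{\rm u}$ via the block form of the Liouville representation, then prove the resulting matrix inequality by splitting off the component along the identity and applying Cauchy--Schwarz to the traceless remainders, which is precisely the argument of \cref{thm:bound_2_real} in the appendix. The only cosmetic difference is that you phrase the decomposition in terms of normalized unit vectors and the angle-addition identity, whereas the paper subtracts $D_i = (\tr M_i/d)\,\ds{I}_d$ and bounds $|\tr(M_1-D_1)(M_2-D_2)|$ directly.
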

{In other words, the leeway in the effective depolarizing constant of a composition $\mc X\mc Y$ is limited by constructive and destructive coherent effects. For longer compositions, we have:}
\begin{thm}\label{thm:uni_bound_fold}
For any $m$ channels $\mc X_i$ with $p(\mc X_i)=p$, $u(\mc X_i)=u$, 
the {effective depolarizing constant} of the composite channel satisfies
\begin{align}\label{eq:uni_bound_fold}
|p(\mc X_{1:m})-p^m|& \leq \sqrt{\frac{d}{2}} {m\choose{2}} u \sin^2(\theta) ~.
\end{align}
Furthermore, if the $\mc{X}_i$ are unital channels, the bound can be improved to
\begin{align}\label{eq:uni_bound_fold_unital}
	|p(\mc X_{1:m})-p^m|& \leq {m\choose{2}} u \sin^2(\theta) ~.
	\end{align}
\end{thm}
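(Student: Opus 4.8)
The plan is to reduce the statement to a purely matrix-analytic inequality about the unital blocks and then to exploit the telescoping structure of the product. By \cref{eq:block} the Liouville representations $\bc{X}_i$ are block lower-triangular, so their product is again block lower-triangular with unital block equal to the ordered product of the individual unital blocks; writing $M_i := (\bc{X}_i)_{\rm u}$ and $n := d^2-1$, this gives $p(\mc{X}_{1:m}) = \tr(M_1\cdots M_m)/n$ via \cref{eq:Liouville_parameters}. Since all $\mc{X}_i$ share the same $p$ and $u$, I would decompose each unital block as $M_i = p\,\ds{I} + N_i$ with $\tr N_i = 0$, and record that $\|N_i\|_F^2 = (u-p^2)n = u\sin^2\theta\,n$, using \cref{eq:coherenceangle} so that $p = \sqrt{u}\cos\theta$. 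Thus the theorem is equivalent to the matrix bound $\bigl|\tr(M_1\cdots M_m) - p^m n\bigr| \le c\binom{m}{2}\|N_i\|_F^2$, with $c=1$ in the unital case and $c=\sqrt{d/2}$ in general.

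Next I would establish the exact telescoping identity
\[
\tr(M_1\cdots M_m) - p^m n = \sum_{1\le i<j\le m} p^{\,m-1-i}\,\tr\!\big((M_1\cdots M_{i-1})\,N_i N_j\big),
\]
obtained by telescoping $p^{m-j}\tr(M_1\cdots M_j)$ once in the last factor (where $\tr N_j = 0$ kills the diagonal contributions), and then telescoping the prefactor $M_1\cdots M_{j-1}-p^{j-1}\ds{I}$ a second time. The purpose of writing the error in this ``two-$N$'' form is that every summand carries exactly two traceless factors, which is precisely what produces the $\sin^2\theta$ and the $\binom{m}{2}$ count of pairs. I would then bound each summand by $\bigl|\tr((M_1\cdots M_{i-1})N_iN_j)\bigr| \le \|M_1\cdots M_{i-1}\|_\infty\,\|N_iN_j\|_1 \le \|M_1\cdots M_{i-1}\|_\infty\,\|N_i\|_F\|N_j\|_F$, using trace--norm duality and the Schatten Hölder inequality, together with $|p|^{\,m-1-i}\le 1$ (which follows from $u\le 1$).

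The key step --- and the one I expect to be the main obstacle to getting the constant right --- is bounding $\|M_1\cdots M_{i-1}\|_\infty$ \emph{uniformly}. A naive submultiplicative estimate would give a growing factor $(\sqrt{d/2})^{\,i-1}$, which would ruin the claim. The resolution is to observe that $M_1\cdots M_{i-1}$ is itself the unital block of the composite CPTP map $\mc{X}_{1:i-1}$, so the single-channel spectral-norm bounds of \cite{GarciaPerez2006} apply to the partial product directly: $\|M_1\cdots M_{i-1}\|_\infty \le \sqrt{d/2}$ for general channels and $\le 1$ for unital channels, with no accumulation across the product. Substituting this uniform bound into the telescoped sum gives $\bigl|\tr(M_1\cdots M_m)-p^m n\bigr| \le c\binom{m}{2}\|N_i\|_F^2$, and dividing by $n$ and using $\|N_i\|_F^2/n = u\sin^2\theta$ yields \cref{eq:uni_bound_fold,eq:uni_bound_fold_unital}. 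The unital improvement is then automatic, being exactly the case $c=1$.
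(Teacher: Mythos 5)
Your proposal is correct and follows essentially the same route as the paper's appendix proof: the same decomposition $M_i = p\,\ds{I} + N_i$ of the unital blocks, the same double-telescoping identity producing one traceless pair $N_iN_j$ per summand, and the same crucial observation that the partial product $M_1\cdots M_{i-1}$ is itself the unital block of a CPTP map so its spectral norm is uniformly bounded by $\sqrt{d/2}$ (or $1$ in the unital case) via \cite{GarciaPerez2006}. The only differences are cosmetic: you use trace-norm duality plus Schatten--H\"older where the paper uses Cauchy--Schwarz plus $\|AB\|_F\le\|A\|_2\|B\|_F$, and you bound $|p|^{m-1-i}\le 1$ termwise where the paper evaluates the sum $S=\sum_{i=1}^{m-1}ip^{i-1}$ exactly before relaxing it to $\binom{m}{2}$.
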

{Notice that the binomial factor -- which indicates a quadratic behavior in $m$ -- demonstrates that the effective depolarizing constant of a large composition, $p(\mc X_{1:m})$, can quickly differ from $p^m$. This difference grows quicker with the coherence angle, which can be tied to coherent effects through \cref{eq:saturation_cor}. }

{The bounds in \cref{thm:unitarity_bound} can be made even tighter if
one of the channels is guaranteed to be Pauli. 
	\begin{thm}\label{thm:unitarity_bound_Pauli}
		Consider a Pauli channel $\mc{X}$ and any quantum channel $\mc{Y}$. 
		Then, the composite 
		infidelity is essentially linear in the individual infidelities $r(\mc X)$ and $r(\mc Y)$:
		\begin{align}\label{eq:unitarity_bound_Pauli}
		r(\mc X \mc Y) = r(\mc X) + r(\mc Y) + O(r(\mc X) r(\mc Y)) ~.
		\end{align} 
	\end{thm}
This bound is to be contrasted with the naive usage of
\cref{thm:unitarity_bound}:
		\begin{align}
		r(\mc X \mc Y) & = r(\mc X) + r(\mc Y) + O(\theta(\mc X) \theta{(\mc Y)})  \tag{\cref{thm:unitarity_bound}} \\
		& =  r(\mc X) + r(\mc Y) + O(r(\mc X) \sqrt{r(\mc Y)}) \tag{\cref{eq:angle_cases}}~.
		\end{align} 
The improvement 
can be easily shown as follow. The infidelity is invariant under unitary conjugation
$r(\mc X \mc Y ) =r(\mc U \mc X \mc Y \mc U^\dagger)$ or convex combination of thereof.  In particular,  it is invariant under a Pauli twirl. Since $\mc X$ is a Pauli channel, it commutes with Pauli unitaries, and the twirl gets effectively performed on 
$\mc Y$, which becomes a Pauli channel $\mc Y_{\rm Pauli}$ with 
low coherence angle $\theta(\mc Y_{\rm Pauli})= O(r(\mc Y))$ 
(see \cref{eq:angle_cases}). From there we can apply \cref{thm:unitarity_bound}.}

{\Cref{thm:unitarity_bound,thm:uni_bound_fold} implicitly suggest that using
the coherence angle (rather than the infidelity) as the objective function\footnote{A more stable choice would be $\sin^2(\theta)$.} for optimizing  operations would strongly tighten eventual 
	assertions about the fidelity of circuit constructions.}

\section{Application: Interleaved RB}\label{sec:interleaved_RB}

The fidelity extracted from standard RB experiments typically characterizes the 
average error over a gate set $\mc G$, defined as
\begin{align}
	\mc E := |\mc G|^{-1}\sum_{g \in \mc G} \mc E_g.
\end{align}
However, one might only care about the fidelity $F(\mc{E}_h)$ attached to a 
specific gate of interest $h \in \mc G$, such as one of the generators required 
for universal quantum computing. The interleaved RB protocol (IRB)
\cite{Magesan2012b} yields a fidelity estimate of $\mc E_h \mc E$\footnote{{For the sake of simplicity, we assume that the protocols all provide fidelity estimates defined with respect to the same (or very close) ideal representation of gates.}}, the 
composition between the single gate error and the gate set error, which  
provides bounds on the desired value $F(\mc{E}_h)$. An issue with this approach 
is that these bounds generally have a wide range, since possible coherent 
effects cannot be ignored. This issue is illustrated by the results of two 
simulations of interleaved RB experiments, plotted in \cref{fig:irb_rotvsdep}.
In both scenarios, the fidelity of the gate error and of the composed gate were 
fixed at $F(\mc E)=0.9975$ and $F(\mc E_h \mc E)=0.9960$ respectively, hence 
leading to the same single gate fidelity estimate. In the first case, the 
interleaved gate $h$ is unitary with high fidelity ($F(\mc E_h)=0.9991$), 
whereas in the second case the error is depolarizing, with a lower fidelity 
($F(\mc E_h)=0.9975$). This example illustrates how interleaved RB, without a 
measure of unitarity, can only provide a loose estimate of the infidelity of an 
individual gate.
   
More generally, rearranging the bound in \cref{thm:chi00_bound} to isolate 
$\chi_{00}^{\mc{Y}}$ gives
\begin{align}\label{eq:chi00_bound_comp}
\Big|\chi_{00}^{\mc{Y}} - \chi_{00}^{\mc{X}\mc{Y}} \chi_{00}^{\mc{X}} 
-(1-\chi_{00}^{\mc{X}\mc{Y}})(1-\chi_{00}^{\mc{X}}) \Big| 
\notag \\
\leq 2\sqrt{\chi_{00}^{\mc{X}\mc{Y}} 
	\chi_{00}^{\mc{X}}(1-\chi_{00}^{\mc{X}\mc{Y}}) 	
	(1-\chi_{00}^{\mc{X}})}.
\end{align}
Moreover, this bound cannot be improved without further information. Now 
suppose that $r(\mc{E}_h\mc{E})\approx 2 r(\mc E)$, so that the 	   
uncertainty of $r(\mc E_h)$, obtained via \cref{eq:chi00_bound_comp} and 
\cref{tab:conversion} is
\begin{align}
 \Delta r(\mc E_h) \approx 4 \sqrt{2} r(\mc E).
\end{align}
While this bound does give an estimate of the infidelity, this estimate is 
comparable to the following naive estimate that requires no additional 
experiment. As the fidelity, and hence the infidelity, is a linear function of 
$\mc E$ we have
\begin{align}
 r(\mc E)=|\mc G|^{-1} \sum_{g \in \mc G} r(\mc E_g)
\end{align}
which, since $r(\mc E)$ is non-negative for any channel $\mc E$, implies
\begin{align}
r(\mc E_h) \leq |\mc G| r(\mc E)
\end{align}
for any $h \in \mc G$. (Note also that this bound can be heuristically improved 
by identifying sets of gates that are expected to have comparable error.) When 
$\mc G$ is chosen to be the $12$-element subgroup of the Clifford group that 
forms a unitary $2$-design, the naive bound is, at the very worst, a factor of 
$3/\sqrt{2}$ worse than the bound from interleaved benchmarking and requires no 
additional statistical analysis or data collection.

However, if the error channels were guaranteed to be depolarizing, 
$F(\mc{E}_h)$ could be exactly estimated from an interleaved RB experiment. {In 
	general, we can use our knowledge of the unitarity of $\mc E$ -- which can be obtained from a URB experiment\footnote{{The current analysis of URB is done under a gate-independent noise approximation.}} -- to quantify how close the error model is to 
	depolarizing noise.} From \cref{thm:unitarity_bound}, we then have the following 
bounds, which can be orders of magnitude tighter as illustrated in 
\cref{fig:uni_bound}.
\begin{cor}\label{cor:irb}
	For any two quantum channels $\mc{E}_h$ and $\mc{E}$,
	{\begin{align}\label{eq:uni_bound}
		\left|p(\mc{E}_h) - \frac{p(\mc{E}_h \mc E )p(\mc E )}{u(\mc{E})}\right|
		& \leq \sqrt{1-\frac{p(\mc{E})^2}{u(\mc{E})}}\sqrt{1-\frac{p(\mc E_h\mc{E})^2}{u(\mc{E})}}.
		\end{align}}
\end{cor}	
{Notice that this new estimate of $p(\mc E_h)$ is an amalgam of three experiments: standard RB, IRB and unitarity RB.} {A recommended experimental practice would be, for instance \cite{Yang2018}:
\begin{itemize}
	\item Perform standard RB over the Clifford group.  Estimate the resulting 
	decay parameter $p(\mc E)$, where $\mc E$ is tied to the average 
	error over the Clifford group.
	\item Perform unitarity RB  over the Clifford group. Estimate the resulting decay parameter, 
	which corresponds to the unitarity $u(\mc E)$.
	\item Perform IRB  with the Clifford group as randomizing set and $h$ as ideal gate of interest. Estimate the resulting decay constant $p(\mc E_h \mc E)$, where $\mc E_h$ is the error map attached to $h$.
	\item Use \cref{eq:uni_bound} to bound $p(\mc E_h)$, and use \cref{tab:conversion} to convert it to the fidelity (or infidelity).
\end{itemize}
}{Recall that in the depolarizing case $u(\mc E)= p(\mc E)^2$, 
	for which \cref{eq:uni_bound} reduces to the familiar equality $p(\mc E_h) = p(\mc E_h \mc E)/p(\mc E)$\footnote{In the interleaved RB lingo, this relation is often expressed as $p (\mc E_h) = p_{\rm IRB}/p_{\rm RB}$, where $\mc E_h$ is the error attached to the interleaved gate.}. In fact, the equality remains true up to order $r(\mc E)^2$ in the more general case of stochastic Pauli errors,
	as demonstrated in \cref{thm:unitarity_bound_Pauli}. Treating the
	infidelity as a linear quantity under composition is a very common
	assumption stemming from a classical probabilistic view of error accumulation. To take another example
 	of a linear manipulation, the infidelity
 	per pulse (or infidelity per primitive gate) is often obtained 
 	by implicitly dividing the infidelity of a set of composite gates 
 	by the average number of pulses used to generate them.
 		These are not bad estimates only if the error is mostly stochastic. This might be a valid presumption since many error mechanisms are naturally stochastic, {but is certainly not a trivial one, since coherent effects also commonly arise from faulty control.} The present paper offers a means to {avoid}
 		the {often unrealistic} stochasticity assumption by explicitly providing a confidence interval based on experimental estimates of the unitarity. To illustrate the idea, in \cref{fig:comparison_plot} we applied our bounds on various experimental results \cite{Gaebler2012,Corcoles2013,Kelly2014,Barends2014,Casparis2016,McKay2016,Sheldon2016,Takita2016,McKay2017} and varied {the value of the unitarity}. 
 		}
\begin{figure}
  	\centering
  	\includegraphics[width= 1. \linewidth]{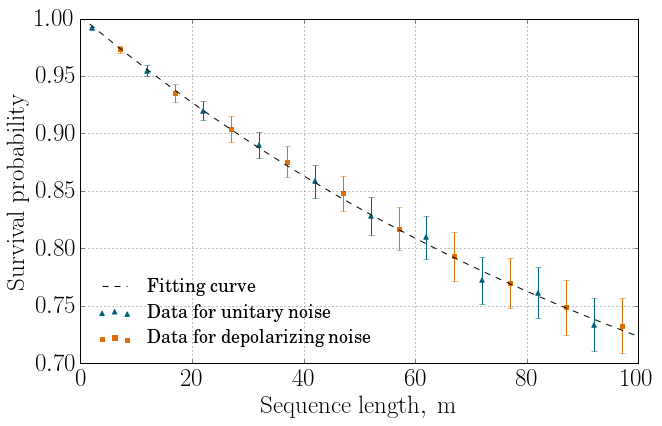}
  	\caption{\label{fig:irb_rotvsdep}
  		(Color online) The average survival probability, $p_{\rm surv.}(m) = |\mc G|^{-m} \sum_i \langle 0 | \mc S^{(i)}_m (|0\rangle \langle 0 | ) | 0\rangle$ 
  		over all sequences $\mc S_m^{(i)}$ of length $m$,  as a function of 
  		the sequence length for two simulated interleaved RB experiments 
  		(see Ref.~\cite{Magesan2012b} for more details) with 
  		two different individual gate errors $\mc E_h$, but a common average 
  		error $\mc E$ with fidelity $0.9975$. Orange squares represent an error
  		model with high fidelity (${F}(\mc E_h)=0.9991$) that interacts 
  		coherently with $\mc E$. Blue triangles represent an error model with 
  		lower fidelity (${F}(\mc E_h)=0.9975$), but that is purely stochastic. 
  		See \cref{sec:interleaved_RB} for more details.}
\end{figure}
          
    \begin{figure*}
    	\centering
    	\includegraphics[width= .75 \linewidth]{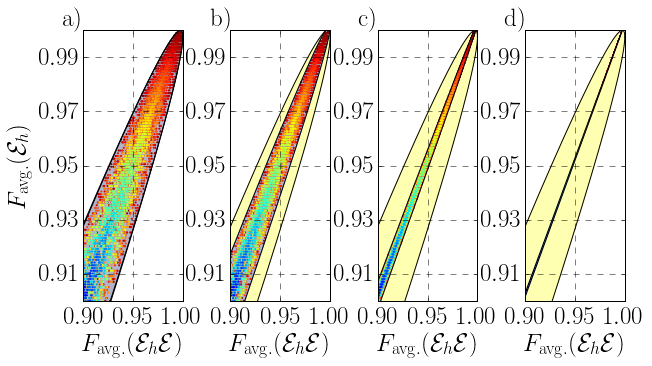}
    	\caption{\label{fig:uni_bound}
   		(Color online) Bounds on the fidelity ${F}(\mc E_h)$ of an individual 
   		gate $h$ as a function of the composite ${F}(\mc E_h \mc E)$ with ${F}(\mc E)$ fixed  
   		and varying $u(\mc E)$:
   		a) $u(\mc E)=1.00000$, b) $u(\mc E)=0.99300$,  
   		c) $u(\mc E)=0.99030$, d) $u(\mc E)=0.99003 \approx p(\mc E)^2$.
   		 The numerical data points 
   		correspond to the values ${F}(\mc E_h)$ and ${F}(\mc E_h \mc E)$ for 
   		randomly-generated channels $\{ \mc E_h, \mc E \}$ satisfying ${F}(\mc 
   		E)=0.9975$ and with the appropriate value of $u(\mc E)$. As illustrated 
   		by the color, the unitarity $u(\mc E_h)$ is minimal in the center of 
   		the shaded region and maximal when the data points approach our bound.}
    \end{figure*}

   \begin{figure*}
   	\centering
   	\includegraphics[width= .75 \linewidth]{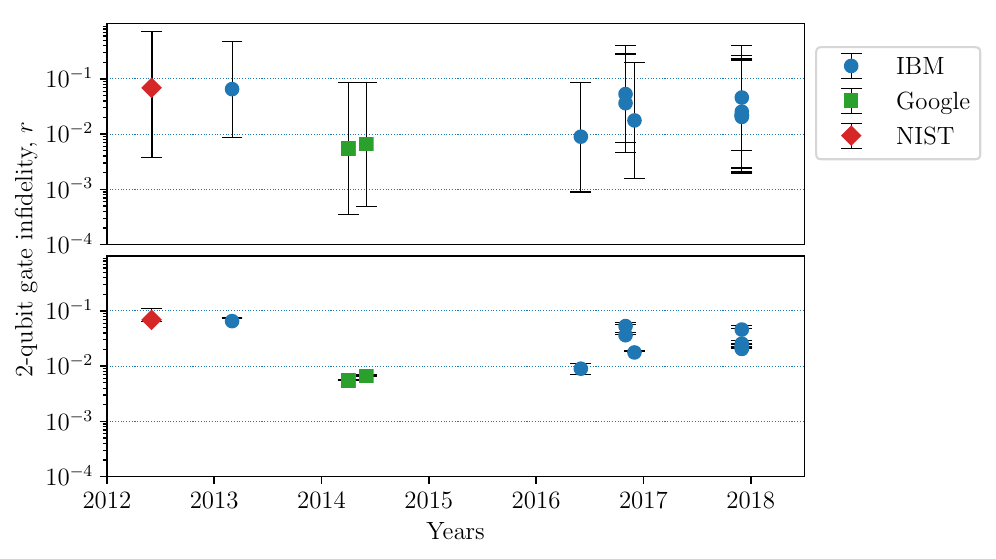}
   	\caption{\label{fig:comparison_plot}
   		(Color online) Bounds on various 2-qubit gate infidelities $r(\mc E_h)$ based on various experimental data \cite{Gaebler2012,Corcoles2013,Kelly2014,Barends2014,Casparis2016,McKay2016,Sheldon2016,Takita2016,McKay2017}. The time refers to the dates of submission. The top 
   		plot uses \cref{eq:uni_bound} with a maximal coherence angle $\theta(\mc E)$, which yields in bounds spanning up to two orders of magnitude. The bottom plot assumes a purely stochastic error model, by which we mean that $u(\mc E) \approx p(\mc E)^2$. For every data point, some statistical error is taken into account, hence the non-zero error bars in the bottom plot.}
   \end{figure*}

\section{Summary and outlook}
{In this paper, we have studied the impact of coherent errors on the 
fidelity of quantum circuits. We first demonstrate why coherent errors are
a serious concern:} a coherent composition of unitary quantum channels 
results in the fastest decay of the fidelity. In this case, the infidelity grows 
quadratically in the number of gates, in contrast with the linear growth for 
stochastic Pauli channels.
{The disparity between these two regimes means that the characterization of the gate fidelities alone only allows to formulate weak statements about the fidelity of more elaborate circuit constructions.}

Hence, in order to characterize {circuits}
more precisely, we introduced a coherence	
angle---{which corresponds to a rotation angle on the Bloch space, as opposed to a contraction} (see \cref{eq:coherenceangle})---which enables a tighter bound on the 
total error in a quantum circuit in terms of robustly estimable quantities 
that smoothly interpolates between the linear and quadratic regimes.

{Our new bound can be used upside-down: from the fidelity of a small circuit construction,
we can bound the fidelity of one of its elements.} As an immediate application, we demonstrated that this bound substantially 
improves the estimates of individual gate fidelities from interleaved 
randomized benchmarking, which, in the absence of the improved bound, are 
comparable to the naive bound obtained by noting that the infidelity from 
standard RB is the average of the infidelities of the individual gates.
{The practicality of \cref{cor:irb} relies on the implicit assumption that the unitarity
obtained from RB as well as the average gate fidelities are 
resulting from closely related gauges\cite{Proctor2017,Wallman2017,Dugas2018}. 
An open problem
would be to relax this assumption by connecting more rigorously the interpretations of different RB experiments. }

\emph{Acknowledgments} --- This research was supported by the U.S. Army 
Research Office through grant W911NF-14-1-0103, CIFAR, the Government of 
Ontario, and the Government of Canada through CFREF, NSERC and Industry Canada.
\clearpage
\onecolumngrid
\appendix

\section{Matrix inequalities on the real field}\label{sec:proofs}

We define the coherence angle of a matrix $M \in \mbb R^{d \times d}$ to be
\begin{align}\label{def:coherent_angle_real}
\theta (M):= \arccos \left(\frac{\tr M}{\sqrt{d}\|M \|_F}\right) \in 
\left[0,\pi\right].
\end{align} 

\begin{thm}\label{thm:bound_2_real}
For any nonzero $M_1$, $M_2$ $\in\mbb R^{d \times d}$ ,
\begin{align}\label{eq:bound_2_real}
	\cos [\theta (M_1) +\theta (M_2)]  \leq \frac{\tr 
	M_1M_2}{\|M_1\|_F\|M_2\|_F} \leq \cos [\theta (M_1)-\theta (M_2)].
\end{align}
Moreover, both bounds are saturated for all values of 
$\|M_1\|_F$, $\|M_2\|_F$, $\theta(M_1)$, and $\theta(M_2)$ in even dimensions.
\end{thm}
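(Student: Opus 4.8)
The plan is to read $\theta(M)$ geometrically. Since $\tr M=\langle \ds I,M\rangle$ in the Frobenius inner product and $\|\ds I\|_F=\sqrt d$, the ratio $\tr M/(\sqrt d\,\|M\|_F)$ is the cosine of the angle between $M$ and $\ds I$, so $\theta(M)$ measures how far $M$ tilts away from the identity. This immediately suggests splitting each matrix into an identity component and a traceless remainder.

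First I would write $M_i=\tfrac{\tr M_i}{d}\ds I+N_i$ with $\tr N_i=0$. Orthogonality of $\ds I$ and $N_i$ gives $\tr M_i=\sqrt d\,\|M_i\|_F\cos\theta(M_i)$ and $\|N_i\|_F=\|M_i\|_F\sin\theta(M_i)$ (the nonnegativity of $\sin$ on $[0,\pi]$ fixing the sign). Expanding the product and discarding the vanishing cross terms yields
\[ \tr M_1M_2=\frac{\tr M_1\,\tr M_2}{d}+\tr N_1N_2=\|M_1\|_F\|M_2\|_F\cos\theta(M_1)\cos\theta(M_2)+\tr N_1N_2. \]

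The crux is bounding $\tr N_1N_2$, and this is where working over $\mathbb R$ is essential: $\tr N_1N_2=\langle N_1,N_2^{T}\rangle$ is an honest Frobenius inner product of $N_1$ with $N_2^{T}$, so Cauchy--Schwarz gives $|\tr N_1N_2|\le\|N_1\|_F\|N_2^{T}\|_F=\|M_1\|_F\|M_2\|_F\sin\theta(M_1)\sin\theta(M_2)$. Dividing by $\|M_1\|_F\|M_2\|_F$ and applying $\cos\theta_1\cos\theta_2\mp\sin\theta_1\sin\theta_2=\cos(\theta_1\pm\theta_2)$ delivers both inequalities at once. I expect this to be the main obstacle: the temptation is to treat $\tr M_1M_2$ as an inner product of $M_1$ and $M_2$, which it is not, and only after peeling off the identity components does the remaining bilinear form become amenable to Cauchy--Schwarz via the transpose.

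For the saturation claim I would exhibit explicit equality cases, which force the traceless parts to be parallel, $N_1\parallel N_2^{T}$, with a prescribed sign for $\tr N_1N_2$. In even dimension $d=2n$ there is a real antisymmetric $J$ with $J^{2}=-\ds I$ and $\|J\|_F^2=d$, for instance $J$ acting as $\pi/2$ rotations on $d/2$ coordinate planes. Taking $M_i=\alpha_i\ds I+\beta_i J$ gives $\tr M_i=\alpha_i d$ and $\|M_i\|_F^2=d(\alpha_i^2+\beta_i^2)$, so writing $(\alpha_i,\beta_i)$ in polar coordinates reproduces any desired $\|M_i\|_F$ and $\theta(M_i)$. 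The point of $J^{2}=-\ds I$ is that $M_1M_2$ then multiplies like the complex numbers $\alpha_i+\beta_i\,\mathrm i$, so $\tr M_1M_2=d(\alpha_1\alpha_2-\beta_1\beta_2)$; choosing $\beta_1,\beta_2$ with equal signs saturates the lower bound $\cos(\theta_1+\theta_2)$ and with opposite signs the upper bound $\cos(\theta_1-\theta_2)$. These are the rotation-block matrices underlying \cref{eq:saturation_cor}, and the sole reason even dimension appears is that a real matrix squaring to $-\ds I$ exists only then.
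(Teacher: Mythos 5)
Your proof is correct and follows essentially the same route as the paper's: the same split of each $M_i$ into its identity component plus a traceless remainder, the same Cauchy--Schwarz bound on $\tr N_1 N_2$ (the paper writes it index-wise, which is exactly your $\langle N_1,N_2^{T}\rangle$ pairing), and the same angle-sum identities. Your saturating matrices $\alpha_i\ds{I}+\beta_i J$ are precisely the paper's rotation blocks in \cref{eq:saturation_real}, just packaged via the complex structure $J^2=-\ds{I}$, which also neatly explains the even-dimension restriction.
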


\begin{proof}
By the Cauchy-Schwarz inequality,
\begin{align}\label{eq:trprod}
|\tr AB|^2 
= \Bigl(\sum_{i,j} A_{i,j}B_{j,i}\Bigr)^2
\leq \Bigl(\sum_{i,j} A_{i,j}^2\Bigr) \Bigl(\sum_{i,j} B_{i,j}^2\Bigr) 
= (\tr A\ct A)(\tr B\ct B) = \|A\|_F^2 \|B\|_F^2.
\end{align}
Setting $D_i := \frac{\tr M_i}{d} \mbb I_d$ for $i=1,2$,
\begin{align}\label{eq:dif2sin}
\|M_i -D_i \|_F 
&= \sqrt{\tr (M_i\ct M_i - M_i\ct D_i - D_i\ct M_i + D_i\ct D_i)} \notag\\
&= \sqrt{\|M_i\|_F^2 - d^{-1}(\tr M_i)^2} \notag\\
&= \|M_i\|_F\sqrt{1 - \cos^2\theta(M)} \notag\\
&= \|M_i\|_F\sin\theta(M)
\end{align}
using $\tr M\ct = \tr M$, which holds for $M\in \mathbb{R}^{d\times d}$. 
Setting $A=M_1 - D_1$ and $B = M_2 - D_2$ in \cref{eq:trprod} and using 
\cref{eq:dif2sin} gives
\begin{align}\label{eq:sinsin_real}
|\tr (M_1-D_1)(M_2-D_2)|
\leq \|M_1\|_F \|M_2\|_F\sin\theta(M_1) \sin\theta (M_2).
\end{align}
Using \cref{def:coherent_angle_real} on the left-hand side gives
\begin{align}\label{eq:coscos_real}
|\tr (M_1-D_1)(M_2-D_2)|
&= |\tr M_1 M_2-d^{-1}\tr M_1 \tr M_2|  \notag\\
&= \left|\tr M_1 M_2- \|M_1\|_F 
\|M_2\|_F \cos[\theta (M_1)]\cos[\theta (M_2)] \right|.
\end{align}
Combining \cref{eq:sinsin_real,eq:coscos_real} with the identity $\cos(a \pm 
b)=\cos(a)\cos(b) \mp \sin(a) \sin(b)$ gives both desired inequalities. 
For even $d$, the bounds of \cref{eq:bound_2_real} are saturated by
\begin{align}\label{eq:saturation_real}
\frac{\|M_i\|_F}{\sqrt{d}} \left(\begin{array}{cc}
\cos\theta (M_i)  &- \sin\theta(M_i) \\ 
 \sin\theta(M_i) &\cos\theta(M_i)
\end{array} \right)  \otimes \mbb  I_{\frac{d}{2}}.
\end{align}
\end{proof}

We can generalize the lower bound of \cref{eq:bound_2_real} to matrix products 
$M_{1:m} := M_1 M_2\cdots M_m$.

\begin{thm}
Let $M_1,\ldots,M_m \in\mbb R^{d \times d}$ be such that for all $j$, 
$\theta(M_j)=\theta$, $\frac{\tr(M_j)}{d}=p \leq 1$, $\frac{\|M_j\|_F^2}{d}=u 
\leq 1$, and $\|M_{1:j}\|_2 \leq \sigma_{\rm max}$. Then,
\begin{align}\label{eq:bound_m_real}
\left|\frac{\tr M_{1:m} }{d}-p^m \right|  \leq \sigma_{\rm max}\left({\frac{1-mp^{m-1}-(m-1)p^m}{(1-p)^2}}\right) u \sin^2(\theta) \leq  \sigma_{\rm max} {m \choose 2} u \sin^2(\theta).
\end{align}
\end{thm}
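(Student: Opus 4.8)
The plan is to separate each $M_j$ into its scalar part and a traceless remainder and then reduce the whole estimate to two nested geometric recursions. Concretely, I would write $M_j = pI + N_j$ with $N_j := M_j - \tfrac{\tr M_j}{d}I$, so that $\tr N_j = 0$ and, by the computation in \cref{eq:dif2sin}, $\|N_j\|_F = \|M_j\|_F\sin\theta = \sqrt{du}\,\sin\theta$. Expanding $M_{1:m} = \prod_j (pI+N_j)$ and taking the normalized trace, the all-scalar term contributes exactly $p^m$, and --- crucially --- every term containing exactly one factor $N_j$ vanishes because $\tr N_j = 0$. This is the structural reason the final bound carries $\sin^2\theta$ rather than $\sin\theta$, and it is also why a single power of $\sigma_{\rm max}$ will suffice: only products with at least two $N$-factors survive.

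To bookkeep the surviving terms I would use two layers of recursion. For the outer layer, peel off the last matrix, $M_{1:m} = M_{1:(m-1)}(pI+N_m)$, giving $a_m = p\,a_{m-1} + e_m$ where $a_m := \tfrac{\tr M_{1:m}}{d} - p^m$ (so $a_0 = a_1 = 0$) and $e_m := \tfrac{\tr(M_{1:(m-1)}N_m)}{d}$. Solving this geometric recursion gives $a_m = \sum_{j=2}^m p^{m-j}e_j$. For the inner layer, I would fix $j$ and peel matrices off $M_{1:(j-1)}$ one at a time inside $h_l := \tfrac{\tr(M_{1:l}N_j)}{d}$: since $\tr(M_{1:l}N_j) = p\,\tr(M_{1:(l-1)}N_j) + \tr(M_{1:(l-1)}N_l N_j)$, this is a second geometric recursion with base case $h_0 = \tfrac{\tr N_j}{d}=0$ (the vanishing single-$N$ term) and source terms the genuinely two-$N$ traces $\tfrac{\tr(M_{1:(l-1)}N_l N_j)}{d}$. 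This yields $e_j = h_{j-1} = \sum_{l=1}^{j-1}p^{j-1-l}\,\tfrac{\tr(M_{1:(l-1)}N_l N_j)}{d}$.

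The key estimate, and the single place where $\sigma_{\rm max}$ enters, is the bound on these source terms. Using $|\tr(ABC)| \le \|A\|_2\,\|B\|_F\,\|C\|_F$ --- which follows from submultiplicativity $\|AB\|_F \le \|A\|_2\|B\|_F$ together with the Cauchy--Schwarz bound $|\tr(XC)|\le\|X\|_F\|C\|_F$ of \cref{eq:trprod} --- with $A=M_{1:(l-1)}$, $B=N_l$, $C=N_j$, and the hypotheses $\|M_{1:(l-1)}\|_2\le\sigma_{\rm max}$ and $\|N_l\|_F\|N_j\|_F = du\sin^2\theta$, each source term is bounded in absolute value by $G := \sigma_{\rm max}\,u\sin^2\theta$. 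Feeding this into both geometric sums and collecting the exponents (the pair $(l,j)$ contributes $p^{(m-j)+(j-1-l)} = p^{m-1-l}$) gives $|a_m| \le G\sum_{1\le l<j\le m}p^{m-1-l}$.

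Finally I would evaluate the double sum in closed form: grouping by $l$ gives $\sum_{l=1}^{m-1}(m-l)p^{m-1-l} = \sum_{k=0}^{m-2}(k+1)p^k = \tfrac{d}{dp}\tfrac{1-p^m}{1-p} = \tfrac{1-mp^{m-1}+(m-1)p^m}{(1-p)^2}$, which is precisely the coefficient appearing in \cref{eq:bound_m_real}. The second inequality is then immediate: since $p\le 1$, each of the $\binom{m}{2}$ summands $p^{m-1-l}$ is at most $1$, so the sum is at most $\binom{m}{2}$. I expect the main obstacle to be bookkeeping the two nested recursions cleanly --- in particular, verifying that the inner source terms are \emph{uniformly} bounded by $G$ and that the two sets of exponents combine to the single power $p^{m-1-l}$ so that the sum over pairs $l<j$ emerges; the remaining steps are the routine geometric summation and the elementary observation that each power of $p$ is bounded by one.
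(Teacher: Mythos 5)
Your proof is correct and follows essentially the same route as the paper's: your two nested geometric recursions are exactly the paper's double telescoping expansion, producing the identical double sum $\sum_{l<j}p^{\,m-1-l}\tr(M_{1:l-1}N_l N_j)/d$, and your key estimate $|\tr(ABC)|\le\|A\|_2\|B\|_F\|C\|_F$ is the same combination of Cauchy--Schwarz and the spectral-norm bound used there. One point in your favor: your closed form $\frac{1-mp^{m-1}+(m-1)p^m}{(1-p)^2}$ is the correct value of the sum (it is nonnegative and tends to $\binom{m}{2}$ as $p\to1$), whereas the paper's statement and final line of the proof carry a sign typo, writing $-(m-1)p^m$.
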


\begin{proof}
Let $D:=p \mbb I_d$, and $M_j= D+ \Delta_j$. Using a telescoping expansion 
twice gives
\begin{align}
M_{1:m}-D^m &=\sum_{i=1}^m M_{1:i-1} (M_i-D) D^{m-i}  \notag \\
& =\sum_{i=1}^m [D^{i-1}+\sum_{j=1}^{i-1}M_{1:j-1} \Delta_j D^{i-1-j}] \Delta_i 
D^{m-i}~.
\end{align}
Taking the trace of each side and using $\tr \Delta_j=0$ gives
\begin{align}
\tr{M_{1:m}} - d p^m = \sum_{i=1}^m \sum_{j=1}^{i-1} p^{m-j-1} \tr{M_{1:j-1} 
\Delta_j \Delta_i } ~.
\end{align}

Therefore
\begin{align}
\left|\tr M_{1:m} -dp^m \right|   & = \left| \sum_{i=1}^m \sum_{j=1}^{i-1} 
p^{m-j-1} \tr{M_{1:j-1} \Delta_j \Delta_i } \right| \notag \\
& \leq \sum_{i=1}^m \sum_{j=1}^{i-1} p^{m-j-1}  |\tr{M_{1:j-1} \Delta_j 
\Delta_i }|  \tag{$\triangle$ inequality} \\
& \leq \sum_{i=1}^m \sum_{j=1}^{i-1} p^{m-j-1}  \|M_{1:j-1} \Delta_j \|_F 
\|\Delta_i\|_F \tag{Cauchy-Schwarz inequality} \\
& \leq \sum_{i=1}^m \sum_{j=1}^{i-1} p^{m-j-1} \sigma_{\rm max} \| \Delta_j 
\|_F \|\Delta_i\|_F  \tag{\cite[Prop. 9.3.6]{Bhatia1997}} \\
& = \sigma_{\rm max} d u \sin^2(\theta) \sum_{i=1}^m \sum_{j=1}^{i-1} 
p^{m-j-1}~,
\end{align}
where we used $\|\Delta_j \|_F= \sqrt{du} \sin(\theta)$ on the last line. 
Let $S:=\sum_{i=1}^m \sum_{j=1}^{i-1} p^{m-j-1}=\sum_{i=1}^{m-1} i p^{i-1}$. 
Using a telescoping expansion leads to 
\begin{align}
S-pS & = -(m-1)p^{m-1}+\sum_{i=0}^{m-2} p^i \notag \\
& =\frac{1-p^{m-1}}{1-p}-(m-1)p^{m-1} \notag \\
\Rightarrow S & =\frac{1-mp^{m-1}-(m-1)p^m}{(1-p)^2}~.
\end{align}
$S$ is maximized when $p=1$, in which case it equals ${m \choose 2}$.
\end{proof}

\bibliography{qcvv}

\begin{thebibliography}{49}%
\makeatletter
\providecommand \@ifxundefined [1]{%
 \@ifx{#1\undefined}
}%
\providecommand \@ifnum [1]{%
 \ifnum #1\expandafter \@firstoftwo
 \else \expandafter \@secondoftwo
 \fi
}%
\providecommand \@ifx [1]{%
 \ifx #1\expandafter \@firstoftwo
 \else \expandafter \@secondoftwo
 \fi
}%
\providecommand \natexlab [1]{#1}%
\providecommand \enquote  [1]{``#1''}%
\providecommand \bibnamefont  [1]{#1}%
\providecommand \bibfnamefont [1]{#1}%
\providecommand \citenamefont [1]{#1}%
\providecommand \href@noop [0]{\@secondoftwo}%
\providecommand \href [0]{\begingroup \@sanitize@url \@href}%
\providecommand \@href[1]{\@@startlink{#1}\@@href}%
\providecommand \@@href[1]{\endgroup#1\@@endlink}%
\providecommand \@sanitize@url [0]{\catcode `\\12\catcode `\$12\catcode
  `\&12\catcode `\#12\catcode `\^12\catcode `\_12\catcode `\%12\relax}%
\providecommand \@@startlink[1]{}%
\providecommand \@@endlink[0]{}%
\providecommand \url  [0]{\begingroup\@sanitize@url \@url }%
\providecommand \@url [1]{\endgroup\@href {#1}{\urlprefix }}%
\providecommand \urlprefix  [0]{URL }%
\providecommand \Eprint [0]{\href }%
\providecommand \doibase [0]{http://dx.doi.org/}%
\providecommand \selectlanguage [0]{\@gobble}%
\providecommand \bibinfo  [0]{\@secondoftwo}%
\providecommand \bibfield  [0]{\@secondoftwo}%
\providecommand \translation [1]{[#1]}%
\providecommand \BibitemOpen [0]{}%
\providecommand \bibitemStop [0]{}%
\providecommand \bibitemNoStop [0]{.\EOS\space}%
\providecommand \EOS [0]{\spacefactor3000\relax}%
\providecommand \BibitemShut  [1]{\csname bibitem#1\endcsname}%
\let\auto@bib@innerbib\@empty
\bibitem [{\citenamefont {Gaebler}\ \emph {et~al.}(2012)\citenamefont
  {Gaebler}, \citenamefont {Meier}, \citenamefont {Tan}, \citenamefont
  {Bowler}, \citenamefont {Lin}, \citenamefont {Hanneke}, \citenamefont {Jost},
  \citenamefont {Home}, \citenamefont {Knill}, \citenamefont {Leibfried},\ and\
  \citenamefont {Wineland}}]{Gaebler2012}%
  \BibitemOpen
  \bibfield  {author} {\bibinfo {author} {\bibfnamefont {J.~P.}\ \bibnamefont
  {Gaebler}}, \bibinfo {author} {\bibfnamefont {A.~M.}\ \bibnamefont {Meier}},
  \bibinfo {author} {\bibfnamefont {T.~R.}\ \bibnamefont {Tan}}, \bibinfo
  {author} {\bibfnamefont {R.}~\bibnamefont {Bowler}}, \bibinfo {author}
  {\bibfnamefont {Y.}~\bibnamefont {Lin}}, \bibinfo {author} {\bibfnamefont
  {D.}~\bibnamefont {Hanneke}}, \bibinfo {author} {\bibfnamefont {J.~D.}\
  \bibnamefont {Jost}}, \bibinfo {author} {\bibfnamefont {J.~P.}\ \bibnamefont
  {Home}}, \bibinfo {author} {\bibfnamefont {E.}~\bibnamefont {Knill}},
  \bibinfo {author} {\bibfnamefont {D.}~\bibnamefont {Leibfried}}, \ and\
  \bibinfo {author} {\bibfnamefont {D.~J.}\ \bibnamefont {Wineland}},\ }\href
  {\doibase 10.1103/PhysRevLett.108.260503} {\bibfield  {journal} {\bibinfo
  {journal} {Physical Review Letters}\ }\textbf {\bibinfo {volume} {108}},\
  \bibinfo {pages} {260503} (\bibinfo {year} {2012})}\BibitemShut {NoStop}%
\bibitem [{\citenamefont {C{\'{o}}rcoles}\ \emph {et~al.}(2013)\citenamefont
  {C{\'{o}}rcoles}, \citenamefont {Gambetta}, \citenamefont {Chow},
  \citenamefont {Smolin}, \citenamefont {Ware}, \citenamefont {Strand},
  \citenamefont {Plourde},\ and\ \citenamefont {Steffen}}]{Corcoles2013}%
  \BibitemOpen
  \bibfield  {author} {\bibinfo {author} {\bibfnamefont {A.~D.}\ \bibnamefont
  {C{\'{o}}rcoles}}, \bibinfo {author} {\bibfnamefont {J.~M.}\ \bibnamefont
  {Gambetta}}, \bibinfo {author} {\bibfnamefont {J.~M.}\ \bibnamefont {Chow}},
  \bibinfo {author} {\bibfnamefont {J.~A.}\ \bibnamefont {Smolin}}, \bibinfo
  {author} {\bibfnamefont {M.}~\bibnamefont {Ware}}, \bibinfo {author}
  {\bibfnamefont {J.}~\bibnamefont {Strand}}, \bibinfo {author} {\bibfnamefont
  {B.~L.~T.}\ \bibnamefont {Plourde}}, \ and\ \bibinfo {author} {\bibfnamefont
  {M.}~\bibnamefont {Steffen}},\ }\href {\doibase 10.1103/PhysRevA.87.030301}
  {\bibfield  {journal} {\bibinfo  {journal} {Physical Review A}\ }\textbf
  {\bibinfo {volume} {87}},\ \bibinfo {pages} {030301} (\bibinfo {year}
  {2013})}\BibitemShut {NoStop}%
\bibitem [{\citenamefont {{Kelly}}\ \emph {et~al.}(2014)\citenamefont
  {{Kelly}}, \citenamefont {{Barends}}, \citenamefont {{Campbell}},
  \citenamefont {{Chen}}, \citenamefont {{Chen}}, \citenamefont {{Chiaro}},
  \citenamefont {{Dunsworth}}, \citenamefont {{Fowler}}, \citenamefont {{Hoi}},
  \citenamefont {{Jeffrey}}, \citenamefont {{Megrant}}, \citenamefont
  {{Mutus}}, \citenamefont {{Neill}}, \citenamefont {{O'Malley}}, \citenamefont
  {{Quintana}}, \citenamefont {{Roushan}}, \citenamefont {{Sank}},
  \citenamefont {{Vainsencher}}, \citenamefont {{Wenner}}, \citenamefont
  {{White}}, \citenamefont {{Cleland}},\ and\ \citenamefont
  {{Martinis}}}]{Kelly2014}%
  \BibitemOpen
  \bibfield  {author} {\bibinfo {author} {\bibfnamefont {J.}~\bibnamefont
  {{Kelly}}}, \bibinfo {author} {\bibfnamefont {R.}~\bibnamefont {{Barends}}},
  \bibinfo {author} {\bibfnamefont {B.}~\bibnamefont {{Campbell}}}, \bibinfo
  {author} {\bibfnamefont {Y.}~\bibnamefont {{Chen}}}, \bibinfo {author}
  {\bibfnamefont {Z.}~\bibnamefont {{Chen}}}, \bibinfo {author} {\bibfnamefont
  {B.}~\bibnamefont {{Chiaro}}}, \bibinfo {author} {\bibfnamefont
  {A.}~\bibnamefont {{Dunsworth}}}, \bibinfo {author} {\bibfnamefont {A.~G.}\
  \bibnamefont {{Fowler}}}, \bibinfo {author} {\bibfnamefont {I.-C.}\
  \bibnamefont {{Hoi}}}, \bibinfo {author} {\bibfnamefont {E.}~\bibnamefont
  {{Jeffrey}}}, \bibinfo {author} {\bibfnamefont {A.}~\bibnamefont
  {{Megrant}}}, \bibinfo {author} {\bibfnamefont {J.}~\bibnamefont {{Mutus}}},
  \bibinfo {author} {\bibfnamefont {C.}~\bibnamefont {{Neill}}}, \bibinfo
  {author} {\bibfnamefont {P.~J.~J.}\ \bibnamefont {{O'Malley}}}, \bibinfo
  {author} {\bibfnamefont {C.}~\bibnamefont {{Quintana}}}, \bibinfo {author}
  {\bibfnamefont {P.}~\bibnamefont {{Roushan}}}, \bibinfo {author}
  {\bibfnamefont {D.}~\bibnamefont {{Sank}}}, \bibinfo {author} {\bibfnamefont
  {A.}~\bibnamefont {{Vainsencher}}}, \bibinfo {author} {\bibfnamefont
  {J.}~\bibnamefont {{Wenner}}}, \bibinfo {author} {\bibfnamefont {T.~C.}\
  \bibnamefont {{White}}}, \bibinfo {author} {\bibfnamefont {A.~N.}\
  \bibnamefont {{Cleland}}}, \ and\ \bibinfo {author} {\bibfnamefont {J.~M.}\
  \bibnamefont {{Martinis}}},\ }\href {\doibase 10.1103/PhysRevLett.112.240504}
  {\bibfield  {journal} {\bibinfo  {journal} {Physical Review Letters}\
  }\textbf {\bibinfo {volume} {112}},\ \bibinfo {eid} {240504} (\bibinfo {year}
  {2014})},\ \Eprint {http://arxiv.org/abs/1403.0035} {arXiv:1403.0035
  [quant-ph]} \BibitemShut {NoStop}%
\bibitem [{\citenamefont {Barends}\ \emph {et~al.}(2014)\citenamefont
  {Barends}, \citenamefont {Kelly}, \citenamefont {Megrant}, \citenamefont
  {Veitia}, \citenamefont {Sank}, \citenamefont {Jeffrey}, \citenamefont
  {White}, \citenamefont {Mutus}, \citenamefont {Fowler}, \citenamefont
  {Campbell}, \citenamefont {Chen}, \citenamefont {Chen}, \citenamefont
  {Chiaro}, \citenamefont {Dunsworth}, \citenamefont {Neill}, \citenamefont
  {O/'Malley}, \citenamefont {Roushan}, \citenamefont {Vainsencher},
  \citenamefont {Wenner}, \citenamefont {Korotkov}, \citenamefont {Cleland},\
  and\ \citenamefont {Martinis}}]{Barends2014}%
  \BibitemOpen
  \bibfield  {author} {\bibinfo {author} {\bibfnamefont {R.}~\bibnamefont
  {Barends}}, \bibinfo {author} {\bibfnamefont {J.}~\bibnamefont {Kelly}},
  \bibinfo {author} {\bibfnamefont {A.}~\bibnamefont {Megrant}}, \bibinfo
  {author} {\bibfnamefont {A.}~\bibnamefont {Veitia}}, \bibinfo {author}
  {\bibfnamefont {D.}~\bibnamefont {Sank}}, \bibinfo {author} {\bibfnamefont
  {E.}~\bibnamefont {Jeffrey}}, \bibinfo {author} {\bibfnamefont {T.~C.}\
  \bibnamefont {White}}, \bibinfo {author} {\bibfnamefont {J.}~\bibnamefont
  {Mutus}}, \bibinfo {author} {\bibfnamefont {A.~G.}\ \bibnamefont {Fowler}},
  \bibinfo {author} {\bibfnamefont {B.}~\bibnamefont {Campbell}}, \bibinfo
  {author} {\bibfnamefont {Y.}~\bibnamefont {Chen}}, \bibinfo {author}
  {\bibfnamefont {Z.}~\bibnamefont {Chen}}, \bibinfo {author} {\bibfnamefont
  {B.}~\bibnamefont {Chiaro}}, \bibinfo {author} {\bibfnamefont
  {A.}~\bibnamefont {Dunsworth}}, \bibinfo {author} {\bibfnamefont
  {C.}~\bibnamefont {Neill}}, \bibinfo {author} {\bibfnamefont
  {P.}~\bibnamefont {O/'Malley}}, \bibinfo {author} {\bibfnamefont
  {P.}~\bibnamefont {Roushan}}, \bibinfo {author} {\bibfnamefont
  {A.}~\bibnamefont {Vainsencher}}, \bibinfo {author} {\bibfnamefont
  {J.}~\bibnamefont {Wenner}}, \bibinfo {author} {\bibfnamefont {A.~N.}\
  \bibnamefont {Korotkov}}, \bibinfo {author} {\bibfnamefont {A.~N.}\
  \bibnamefont {Cleland}}, \ and\ \bibinfo {author} {\bibfnamefont {J.~M.}\
  \bibnamefont {Martinis}},\ }\href {http://dx.doi.org/10.1038/nature13171}
  {\bibfield  {journal} {\bibinfo  {journal} {Nature}\ }\textbf {\bibinfo
  {volume} {508}},\ \bibinfo {pages} {500} (\bibinfo {year} {2014})},\ \bibinfo
  {note} {letter}\BibitemShut {NoStop}%
\bibitem [{\citenamefont {{Xia}}\ \emph {et~al.}(2015)\citenamefont {{Xia}},
  \citenamefont {{Lichtman}}, \citenamefont {{Maller}}, \citenamefont {{Carr}},
  \citenamefont {{Piotrowicz}}, \citenamefont {{Isenhower}},\ and\
  \citenamefont {{Saffman}}}]{Xia2015}%
  \BibitemOpen
  \bibfield  {author} {\bibinfo {author} {\bibfnamefont {T.}~\bibnamefont
  {{Xia}}}, \bibinfo {author} {\bibfnamefont {M.}~\bibnamefont {{Lichtman}}},
  \bibinfo {author} {\bibfnamefont {K.}~\bibnamefont {{Maller}}}, \bibinfo
  {author} {\bibfnamefont {A.~W.}\ \bibnamefont {{Carr}}}, \bibinfo {author}
  {\bibfnamefont {M.~J.}\ \bibnamefont {{Piotrowicz}}}, \bibinfo {author}
  {\bibfnamefont {L.}~\bibnamefont {{Isenhower}}}, \ and\ \bibinfo {author}
  {\bibfnamefont {M.}~\bibnamefont {{Saffman}}},\ }\href {\doibase
  10.1103/PhysRevLett.114.100503} {\bibfield  {journal} {\bibinfo  {journal}
  {\prl}\ }\textbf {\bibinfo {volume} {114}},\ \bibinfo {eid} {100503}
  (\bibinfo {year} {2015})},\ \Eprint {http://arxiv.org/abs/1501.02041}
  {arXiv:1501.02041 [quant-ph]} \BibitemShut {NoStop}%
\bibitem [{\citenamefont {Muhonen}\ \emph {et~al.}(2015)\citenamefont
  {Muhonen}, \citenamefont {Laucht}, \citenamefont {Simmons}, \citenamefont
  {Dehollain}, \citenamefont {Kalra}, \citenamefont {Hudson}, \citenamefont
  {Freer}, \citenamefont {Itoh}, \citenamefont {Jamieson}, \citenamefont
  {McCallum}, \citenamefont {Dzurak},\ and\ \citenamefont
  {Morello}}]{Muhonen2015}%
  \BibitemOpen
  \bibfield  {author} {\bibinfo {author} {\bibfnamefont {J.~T.}\ \bibnamefont
  {Muhonen}}, \bibinfo {author} {\bibfnamefont {A.}~\bibnamefont {Laucht}},
  \bibinfo {author} {\bibfnamefont {S.}~\bibnamefont {Simmons}}, \bibinfo
  {author} {\bibfnamefont {J.~P.}\ \bibnamefont {Dehollain}}, \bibinfo {author}
  {\bibfnamefont {R.}~\bibnamefont {Kalra}}, \bibinfo {author} {\bibfnamefont
  {F.~E.}\ \bibnamefont {Hudson}}, \bibinfo {author} {\bibfnamefont
  {S.}~\bibnamefont {Freer}}, \bibinfo {author} {\bibfnamefont {K.~M.}\
  \bibnamefont {Itoh}}, \bibinfo {author} {\bibfnamefont {D.~N.}\ \bibnamefont
  {Jamieson}}, \bibinfo {author} {\bibfnamefont {J.~C.}\ \bibnamefont
  {McCallum}}, \bibinfo {author} {\bibfnamefont {A.~S.}\ \bibnamefont
  {Dzurak}}, \ and\ \bibinfo {author} {\bibfnamefont {A.}~\bibnamefont
  {Morello}},\ }\href {http://stacks.iop.org/0953-8984/27/i=15/a=154205}
  {\bibfield  {journal} {\bibinfo  {journal} {Journal of Physics: Condensed
  Matter}\ }\textbf {\bibinfo {volume} {27}},\ \bibinfo {pages} {154205}
  (\bibinfo {year} {2015})}\BibitemShut {NoStop}%
\bibitem [{\citenamefont {Tarlton}()}]{Tarlton}%
  \BibitemOpen
  \bibfield  {author} {\bibinfo {author} {\bibfnamefont {J.~E.}\ \bibnamefont
  {Tarlton}},\ }\emph {\bibinfo {title} {Probing qubit memory errors at the 10-
  5 level}},\ \href@noop {} {Ph.D. thesis}\BibitemShut {NoStop}%
\bibitem [{\citenamefont {{Casparis}}\ \emph {et~al.}(2016)\citenamefont
  {{Casparis}}, \citenamefont {{Larsen}}, \citenamefont {{Olsen}},
  \citenamefont {{Kuemmeth}}, \citenamefont {{Krogstrup}}, \citenamefont
  {{Nyg{\^a}rd}}, \citenamefont {{Petersson}},\ and\ \citenamefont
  {{Marcus}}}]{Casparis2016}%
  \BibitemOpen
  \bibfield  {author} {\bibinfo {author} {\bibfnamefont {L.}~\bibnamefont
  {{Casparis}}}, \bibinfo {author} {\bibfnamefont {T.~W.}\ \bibnamefont
  {{Larsen}}}, \bibinfo {author} {\bibfnamefont {M.~S.}\ \bibnamefont
  {{Olsen}}}, \bibinfo {author} {\bibfnamefont {F.}~\bibnamefont {{Kuemmeth}}},
  \bibinfo {author} {\bibfnamefont {P.}~\bibnamefont {{Krogstrup}}}, \bibinfo
  {author} {\bibfnamefont {J.}~\bibnamefont {{Nyg{\^a}rd}}}, \bibinfo {author}
  {\bibfnamefont {K.~D.}\ \bibnamefont {{Petersson}}}, \ and\ \bibinfo {author}
  {\bibfnamefont {C.~M.}\ \bibnamefont {{Marcus}}},\ }\href {\doibase
  10.1103/PhysRevLett.116.150505} {\bibfield  {journal} {\bibinfo  {journal}
  {\prl}\ }\textbf {\bibinfo {volume} {116}},\ \bibinfo {eid} {150505}
  (\bibinfo {year} {2016})}\BibitemShut {NoStop}%
\bibitem [{\citenamefont {{McKay}}\ \emph {et~al.}(2016)\citenamefont
  {{McKay}}, \citenamefont {{Filipp}}, \citenamefont {{Mezzacapo}},
  \citenamefont {{Magesan}}, \citenamefont {{Chow}},\ and\ \citenamefont
  {{Gambetta}}}]{McKay2016}%
  \BibitemOpen
  \bibfield  {author} {\bibinfo {author} {\bibfnamefont {D.~C.}\ \bibnamefont
  {{McKay}}}, \bibinfo {author} {\bibfnamefont {S.}~\bibnamefont {{Filipp}}},
  \bibinfo {author} {\bibfnamefont {A.}~\bibnamefont {{Mezzacapo}}}, \bibinfo
  {author} {\bibfnamefont {E.}~\bibnamefont {{Magesan}}}, \bibinfo {author}
  {\bibfnamefont {J.~M.}\ \bibnamefont {{Chow}}}, \ and\ \bibinfo {author}
  {\bibfnamefont {J.~M.}\ \bibnamefont {{Gambetta}}},\ }\href {\doibase
  10.1103/PhysRevApplied.6.064007} {\bibfield  {journal} {\bibinfo  {journal}
  {Physical Review Applied}\ }\textbf {\bibinfo {volume} {6}},\ \bibinfo {eid}
  {064007} (\bibinfo {year} {2016})}\BibitemShut {NoStop}%
\bibitem [{\citenamefont {{Sheldon}}\ \emph {et~al.}(2016)\citenamefont
  {{Sheldon}}, \citenamefont {{Magesan}}, \citenamefont {{Chow}},\ and\
  \citenamefont {{Gambetta}}}]{Sheldon2016}%
  \BibitemOpen
  \bibfield  {author} {\bibinfo {author} {\bibfnamefont {S.}~\bibnamefont
  {{Sheldon}}}, \bibinfo {author} {\bibfnamefont {E.}~\bibnamefont
  {{Magesan}}}, \bibinfo {author} {\bibfnamefont {J.~M.}\ \bibnamefont
  {{Chow}}}, \ and\ \bibinfo {author} {\bibfnamefont {J.~M.}\ \bibnamefont
  {{Gambetta}}},\ }\href {\doibase 10.1103/PhysRevA.93.060302} {\bibfield
  {journal} {\bibinfo  {journal} {Physical Review A}\ }\textbf {\bibinfo
  {volume} {93}},\ \bibinfo {eid} {060302} (\bibinfo {year}
  {2016})}\BibitemShut {NoStop}%
\bibitem [{\citenamefont {{Takita}}\ \emph {et~al.}(2016)\citenamefont
  {{Takita}}, \citenamefont {{C{\'o}rcoles}}, \citenamefont {{Magesan}},
  \citenamefont {{Abdo}}, \citenamefont {{Brink}}, \citenamefont {{Cross}},
  \citenamefont {{Chow}},\ and\ \citenamefont {{Gambetta}}}]{Takita2016}%
  \BibitemOpen
  \bibfield  {author} {\bibinfo {author} {\bibfnamefont {M.}~\bibnamefont
  {{Takita}}}, \bibinfo {author} {\bibfnamefont {A.~D.}\ \bibnamefont
  {{C{\'o}rcoles}}}, \bibinfo {author} {\bibfnamefont {E.}~\bibnamefont
  {{Magesan}}}, \bibinfo {author} {\bibfnamefont {B.}~\bibnamefont {{Abdo}}},
  \bibinfo {author} {\bibfnamefont {M.}~\bibnamefont {{Brink}}}, \bibinfo
  {author} {\bibfnamefont {A.}~\bibnamefont {{Cross}}}, \bibinfo {author}
  {\bibfnamefont {J.~M.}\ \bibnamefont {{Chow}}}, \ and\ \bibinfo {author}
  {\bibfnamefont {J.~M.}\ \bibnamefont {{Gambetta}}},\ }\href {\doibase
  10.1103/PhysRevLett.117.210505} {\bibfield  {journal} {\bibinfo  {journal}
  {\prl}\ }\textbf {\bibinfo {volume} {117}},\ \bibinfo {eid} {210505}
  (\bibinfo {year} {2016})}\BibitemShut {NoStop}%
\bibitem [{\citenamefont {{McKay}}\ \emph
  {et~al.}(2017{\natexlab{a}})\citenamefont {{McKay}}, \citenamefont
  {{Sheldon}}, \citenamefont {{Smolin}}, \citenamefont {{Chow}},\ and\
  \citenamefont {{Gambetta}}}]{McKay2017}%
  \BibitemOpen
  \bibfield  {author} {\bibinfo {author} {\bibfnamefont {D.~C.}\ \bibnamefont
  {{McKay}}}, \bibinfo {author} {\bibfnamefont {S.}~\bibnamefont {{Sheldon}}},
  \bibinfo {author} {\bibfnamefont {J.~A.}\ \bibnamefont {{Smolin}}}, \bibinfo
  {author} {\bibfnamefont {J.~M.}\ \bibnamefont {{Chow}}}, \ and\ \bibinfo
  {author} {\bibfnamefont {J.~M.}\ \bibnamefont {{Gambetta}}},\ }\href@noop {}
  {\bibfield  {journal} {\bibinfo  {journal} {ArXiv e-prints}\ ,\ \bibinfo
  {eid} {arXiv:1712.06550}} (\bibinfo {year} {2017}{\natexlab{a}})},\ \Eprint
  {http://arxiv.org/abs/1712.06550} {arXiv:1712.06550 [quant-ph]} \BibitemShut
  {NoStop}%
\bibitem [{\citenamefont {Emerson}\ \emph {et~al.}(2005)\citenamefont
  {Emerson}, \citenamefont {Alicki},\ and\ \citenamefont
  {\.{Z}yczkowski}}]{Emerson2005}%
  \BibitemOpen
  \bibfield  {author} {\bibinfo {author} {\bibfnamefont {J.}~\bibnamefont
  {Emerson}}, \bibinfo {author} {\bibfnamefont {R.}~\bibnamefont {Alicki}}, \
  and\ \bibinfo {author} {\bibfnamefont {K.}~\bibnamefont {\.{Z}yczkowski}},\
  }\href {\doibase 10.1088/1464-4266/7/10/021} {\bibfield  {journal} {\bibinfo
  {journal} {Journal of Optics B: Quantum and Semiclassical Optics}\ }\textbf
  {\bibinfo {volume} {7}},\ \bibinfo {pages} {S347} (\bibinfo {year}
  {2005})}\BibitemShut {NoStop}%
\bibitem [{\citenamefont {Dankert}\ \emph {et~al.}(2006)\citenamefont
  {Dankert}, \citenamefont {Cleve}, \citenamefont {Emerson},\ and\
  \citenamefont {Livine}}]{DCEL2006}%
  \BibitemOpen
  \bibfield  {author} {\bibinfo {author} {\bibfnamefont {C.}~\bibnamefont
  {Dankert}}, \bibinfo {author} {\bibfnamefont {R.}~\bibnamefont {Cleve}},
  \bibinfo {author} {\bibfnamefont {J.}~\bibnamefont {Emerson}}, \ and\
  \bibinfo {author} {\bibfnamefont {E.}~\bibnamefont {Livine}},\ }\href@noop {}
  {\bibfield  {journal} {\bibinfo  {journal} {Phys. Rev. A}\ }\textbf {\bibinfo
  {volume} {80}} (\bibinfo {year} {2009); arxiv.org/abs/quant-ph/0606161
  (2006})}\BibitemShut {NoStop}%
\bibitem [{\citenamefont {Magesan}\ \emph {et~al.}(2011)\citenamefont
  {Magesan}, \citenamefont {Gambetta},\ and\ \citenamefont
  {Emerson}}]{Magesan2011}%
  \BibitemOpen
  \bibfield  {author} {\bibinfo {author} {\bibfnamefont {E.}~\bibnamefont
  {Magesan}}, \bibinfo {author} {\bibfnamefont {J.~M.}\ \bibnamefont
  {Gambetta}}, \ and\ \bibinfo {author} {\bibfnamefont {J.}~\bibnamefont
  {Emerson}},\ }\href {\doibase 10.1103/PhysRevLett.106.180504} {\bibfield
  {journal} {\bibinfo  {journal} {Physical Review Letters}\ }\textbf {\bibinfo
  {volume} {106}},\ \bibinfo {pages} {180504} (\bibinfo {year}
  {2011})}\BibitemShut {NoStop}%
\bibitem [{\citenamefont {Magesan}\ \emph {et~al.}(2012)\citenamefont
  {Magesan}, \citenamefont {Gambetta}, \citenamefont {Johnson}, \citenamefont
  {Ryan}, \citenamefont {Chow}, \citenamefont {Merkel}, \citenamefont
  {da~Silva}, \citenamefont {Keefe}, \citenamefont {Rothwell}, \citenamefont
  {Ohki}, \citenamefont {Ketchen},\ and\ \citenamefont
  {Steffen}}]{Magesan2012b}%
  \BibitemOpen
  \bibfield  {author} {\bibinfo {author} {\bibfnamefont {E.}~\bibnamefont
  {Magesan}}, \bibinfo {author} {\bibfnamefont {J.~M.}\ \bibnamefont
  {Gambetta}}, \bibinfo {author} {\bibfnamefont {B.~R.}\ \bibnamefont
  {Johnson}}, \bibinfo {author} {\bibfnamefont {C.~A.}\ \bibnamefont {Ryan}},
  \bibinfo {author} {\bibfnamefont {J.~M.}\ \bibnamefont {Chow}}, \bibinfo
  {author} {\bibfnamefont {S.~T.}\ \bibnamefont {Merkel}}, \bibinfo {author}
  {\bibfnamefont {M.~P.}\ \bibnamefont {da~Silva}}, \bibinfo {author}
  {\bibfnamefont {G.~A.}\ \bibnamefont {Keefe}}, \bibinfo {author}
  {\bibfnamefont {M.~B.}\ \bibnamefont {Rothwell}}, \bibinfo {author}
  {\bibfnamefont {T.~A.}\ \bibnamefont {Ohki}}, \bibinfo {author}
  {\bibfnamefont {M.~B.}\ \bibnamefont {Ketchen}}, \ and\ \bibinfo {author}
  {\bibfnamefont {M.}~\bibnamefont {Steffen}},\ }\href {\doibase
  10.1103/PhysRevLett.109.080505} {\bibfield  {journal} {\bibinfo  {journal}
  {Physical Review Letters}\ }\textbf {\bibinfo {volume} {109}},\ \bibinfo
  {pages} {080505} (\bibinfo {year} {2012})},\ \Eprint
  {http://arxiv.org/abs/1203.4550} {arXiv:1203.4550} \BibitemShut {NoStop}%
\bibitem [{\citenamefont {Wallman}\ \emph
  {et~al.}(2015{\natexlab{a}})\citenamefont {Wallman}, \citenamefont {Granade},
  \citenamefont {Harper},\ and\ \citenamefont {Flammia}}]{Wallman2015}%
  \BibitemOpen
  \bibfield  {author} {\bibinfo {author} {\bibfnamefont {J.}~\bibnamefont
  {Wallman}}, \bibinfo {author} {\bibfnamefont {C.}~\bibnamefont {Granade}},
  \bibinfo {author} {\bibfnamefont {R.}~\bibnamefont {Harper}}, \ and\ \bibinfo
  {author} {\bibfnamefont {S.~T.}\ \bibnamefont {Flammia}},\ }\href
  {http://stacks.iop.org/1367-2630/17/i=11/a=113020} {\bibfield  {journal}
  {\bibinfo  {journal} {New Journal of Physics}\ }\textbf {\bibinfo {volume}
  {17}},\ \bibinfo {pages} {113020} (\bibinfo {year}
  {2015}{\natexlab{a}})}\BibitemShut {NoStop}%
\bibitem [{\citenamefont {Wallman}\ \emph
  {et~al.}(2015{\natexlab{b}})\citenamefont {Wallman}, \citenamefont
  {Barnhill},\ and\ \citenamefont {Emerson}}]{Wallman2015b}%
  \BibitemOpen
  \bibfield  {author} {\bibinfo {author} {\bibfnamefont {J.~J.}\ \bibnamefont
  {Wallman}}, \bibinfo {author} {\bibfnamefont {M.}~\bibnamefont {Barnhill}}, \
  and\ \bibinfo {author} {\bibfnamefont {J.}~\bibnamefont {Emerson}},\ }\href
  {\doibase 10.1103/PhysRevLett.115.060501} {\bibfield  {journal} {\bibinfo
  {journal} {Physical Review Letters}\ }\textbf {\bibinfo {volume} {115}},\
  \bibinfo {pages} {060501} (\bibinfo {year} {2015}{\natexlab{b}})}\BibitemShut
  {NoStop}%
\bibitem [{\citenamefont {Wallman}\ and\ \citenamefont
  {Emerson}(2016)}]{Wallman2016}%
  \BibitemOpen
  \bibfield  {author} {\bibinfo {author} {\bibfnamefont {J.~J.}\ \bibnamefont
  {Wallman}}\ and\ \bibinfo {author} {\bibfnamefont {J.}~\bibnamefont
  {Emerson}},\ }\href {\doibase 10.1103/PhysRevA.94.052325} {\bibfield
  {journal} {\bibinfo  {journal} {Physical Review A}\ }\textbf {\bibinfo
  {volume} {94}},\ \bibinfo {pages} {052325} (\bibinfo {year}
  {2016})}\BibitemShut {NoStop}%
\bibitem [{\citenamefont {{Proctor}}\ \emph
  {et~al.}(2017{\natexlab{a}})\citenamefont {{Proctor}}, \citenamefont
  {{Rudinger}}, \citenamefont {{Young}}, \citenamefont {{Sarovar}},\ and\
  \citenamefont {{Blume-Kohout}}}]{Proctor2017}%
  \BibitemOpen
  \bibfield  {author} {\bibinfo {author} {\bibfnamefont {T.}~\bibnamefont
  {{Proctor}}}, \bibinfo {author} {\bibfnamefont {K.}~\bibnamefont
  {{Rudinger}}}, \bibinfo {author} {\bibfnamefont {K.}~\bibnamefont {{Young}}},
  \bibinfo {author} {\bibfnamefont {M.}~\bibnamefont {{Sarovar}}}, \ and\
  \bibinfo {author} {\bibfnamefont {R.}~\bibnamefont {{Blume-Kohout}}},\ }\href
  {\doibase 10.1103/PhysRevLett.119.130502} {\bibfield  {journal} {\bibinfo
  {journal} {\prl}\ }\textbf {\bibinfo {volume} {119}},\ \bibinfo {eid}
  {130502} (\bibinfo {year} {2017}{\natexlab{a}})}\BibitemShut {NoStop}%
\bibitem [{\citenamefont {Wallman}(2018)}]{Wallman2017}%
  \BibitemOpen
  \bibfield  {author} {\bibinfo {author} {\bibfnamefont {J.~J.}\ \bibnamefont
  {Wallman}},\ }\href {\doibase 10.22331/q-2018-01-29-47} {\bibfield  {journal}
  {\bibinfo  {journal} {{Quantum}}\ }\textbf {\bibinfo {volume} {2}},\ \bibinfo
  {pages} {47} (\bibinfo {year} {2018})}\BibitemShut {NoStop}%
\bibitem [{\citenamefont {{Carignan-Dugas}}\ \emph {et~al.}(2018)\citenamefont
  {{Carignan-Dugas}}, \citenamefont {{Boone}}, \citenamefont {{Wallman}},\ and\
  \citenamefont {{Emerson}}}]{Dugas2018}%
  \BibitemOpen
  \bibfield  {author} {\bibinfo {author} {\bibfnamefont {A.}~\bibnamefont
  {{Carignan-Dugas}}}, \bibinfo {author} {\bibfnamefont {K.}~\bibnamefont
  {{Boone}}}, \bibinfo {author} {\bibfnamefont {J.~J.}\ \bibnamefont
  {{Wallman}}}, \ and\ \bibinfo {author} {\bibfnamefont {J.}~\bibnamefont
  {{Emerson}}},\ }\href {\doibase 10.1088/1367-2630/aadcc7} {\bibfield
  {journal} {\bibinfo  {journal} {New Journal of Physics}\ }\textbf {\bibinfo
  {volume} {20}},\ \bibinfo {eid} {092001} (\bibinfo {year}
  {2018})}\BibitemShut {NoStop}%
\bibitem [{\citenamefont {{Harper}}\ \emph {et~al.}(2019)\citenamefont
  {{Harper}}, \citenamefont {{Hincks}}, \citenamefont {{Ferrie}}, \citenamefont
  {{Flammia}},\ and\ \citenamefont {{Wallman}}}]{Harper2019}%
  \BibitemOpen
  \bibfield  {author} {\bibinfo {author} {\bibfnamefont {R.}~\bibnamefont
  {{Harper}}}, \bibinfo {author} {\bibfnamefont {I.}~\bibnamefont {{Hincks}}},
  \bibinfo {author} {\bibfnamefont {C.}~\bibnamefont {{Ferrie}}}, \bibinfo
  {author} {\bibfnamefont {S.~T.}\ \bibnamefont {{Flammia}}}, \ and\ \bibinfo
  {author} {\bibfnamefont {J.~J.}\ \bibnamefont {{Wallman}}},\ }\href@noop {}
  {\bibfield  {journal} {\bibinfo  {journal} {arXiv e-prints}\ ,\ \bibinfo
  {eid} {arXiv:1901.00535}} (\bibinfo {year} {2019})},\ \Eprint
  {http://arxiv.org/abs/1901.00535} {arXiv:1901.00535 [quant-ph]} \BibitemShut
  {NoStop}%
\bibitem [{\citenamefont {{Dirkse}}\ \emph {et~al.}(2019)\citenamefont
  {{Dirkse}}, \citenamefont {{Helsen}},\ and\ \citenamefont
  {{Wehner}}}]{Dirkse2019}%
  \BibitemOpen
  \bibfield  {author} {\bibinfo {author} {\bibfnamefont {B.}~\bibnamefont
  {{Dirkse}}}, \bibinfo {author} {\bibfnamefont {J.}~\bibnamefont {{Helsen}}},
  \ and\ \bibinfo {author} {\bibfnamefont {S.}~\bibnamefont {{Wehner}}},\
  }\href {\doibase 10.1103/PhysRevA.99.012315} {\bibfield  {journal} {\bibinfo
  {journal} {Physical Review A}\ }\textbf {\bibinfo {volume} {99}},\ \bibinfo
  {eid} {012315} (\bibinfo {year} {2019})},\ \Eprint
  {http://arxiv.org/abs/1808.00850} {arXiv:1808.00850 [quant-ph]} \BibitemShut
  {NoStop}%
\bibitem [{\citenamefont {{Merkel}}\ \emph {et~al.}(2018)\citenamefont
  {{Merkel}}, \citenamefont {{Pritchett}},\ and\ \citenamefont
  {{Fong}}}]{Merkel2018}%
  \BibitemOpen
  \bibfield  {author} {\bibinfo {author} {\bibfnamefont {S.~T.}\ \bibnamefont
  {{Merkel}}}, \bibinfo {author} {\bibfnamefont {E.~J.}\ \bibnamefont
  {{Pritchett}}}, \ and\ \bibinfo {author} {\bibfnamefont {B.~H.}\ \bibnamefont
  {{Fong}}},\ }\href@noop {} {\bibfield  {journal} {\bibinfo  {journal} {arXiv
  e-prints}\ ,\ \bibinfo {eid} {arXiv:1804.05951}} (\bibinfo {year} {2018})},\
  \Eprint {http://arxiv.org/abs/1804.05951} {arXiv:1804.05951 [quant-ph]}
  \BibitemShut {NoStop}%
\bibitem [{\citenamefont {{Proctor}}\ \emph
  {et~al.}(2017{\natexlab{b}})\citenamefont {{Proctor}}, \citenamefont
  {{Rudinger}}, \citenamefont {{Young}}, \citenamefont {{Sarovar}},\ and\
  \citenamefont {{Blume-Kohout}}}]{Proctor2017v1}%
  \BibitemOpen
  \bibfield  {author} {\bibinfo {author} {\bibfnamefont {T.}~\bibnamefont
  {{Proctor}}}, \bibinfo {author} {\bibfnamefont {K.}~\bibnamefont
  {{Rudinger}}}, \bibinfo {author} {\bibfnamefont {K.}~\bibnamefont {{Young}}},
  \bibinfo {author} {\bibfnamefont {M.}~\bibnamefont {{Sarovar}}}, \ and\
  \bibinfo {author} {\bibfnamefont {R.}~\bibnamefont {{Blume-Kohout}}},\
  }\href@noop {} {\bibfield  {journal} {\bibinfo  {journal} {arXiv e-prints}\
  ,\ \bibinfo {eid} {arXiv:1702.01853v1}} (\bibinfo {year}
  {2017}{\natexlab{b}})},\ \Eprint {http://arxiv.org/abs/1702.01853v1}
  {arXiv:1702.01853v1 [quant-ph]} \BibitemShut {NoStop}%
\bibitem [{\citenamefont {{Qi}}\ and\ \citenamefont {{Khoon
  Ng}}(2018)}]{Qi2018}%
  \BibitemOpen
  \bibfield  {author} {\bibinfo {author} {\bibfnamefont {J.}~\bibnamefont
  {{Qi}}}\ and\ \bibinfo {author} {\bibfnamefont {H.}~\bibnamefont {{Khoon
  Ng}}},\ }\href@noop {} {\bibfield  {journal} {\bibinfo  {journal} {arXiv
  e-prints}\ ,\ \bibinfo {eid} {arXiv:1805.10622}} (\bibinfo {year} {2018})},\
  \Eprint {http://arxiv.org/abs/1805.10622} {arXiv:1805.10622 [quant-ph]}
  \BibitemShut {NoStop}%
\bibitem [{\citenamefont {{Veldhorst}}\ \emph {et~al.}(2014)\citenamefont
  {{Veldhorst}}, \citenamefont {{Hwang}}, \citenamefont {{Yang}}, \citenamefont
  {{Leenstra}}, \citenamefont {{de Ronde}}, \citenamefont {{Dehollain}},
  \citenamefont {{Muhonen}}, \citenamefont {{Hudson}}, \citenamefont {{Itoh}},
  \citenamefont {{Morello}},\ and\ \citenamefont {{Dzurak}}}]{Veldhorst2014}%
  \BibitemOpen
  \bibfield  {author} {\bibinfo {author} {\bibfnamefont {M.}~\bibnamefont
  {{Veldhorst}}}, \bibinfo {author} {\bibfnamefont {J.~C.~C.}\ \bibnamefont
  {{Hwang}}}, \bibinfo {author} {\bibfnamefont {C.~H.}\ \bibnamefont {{Yang}}},
  \bibinfo {author} {\bibfnamefont {A.~W.}\ \bibnamefont {{Leenstra}}},
  \bibinfo {author} {\bibfnamefont {B.}~\bibnamefont {{de Ronde}}}, \bibinfo
  {author} {\bibfnamefont {J.~P.}\ \bibnamefont {{Dehollain}}}, \bibinfo
  {author} {\bibfnamefont {J.~T.}\ \bibnamefont {{Muhonen}}}, \bibinfo {author}
  {\bibfnamefont {F.~E.}\ \bibnamefont {{Hudson}}}, \bibinfo {author}
  {\bibfnamefont {K.~M.}\ \bibnamefont {{Itoh}}}, \bibinfo {author}
  {\bibfnamefont {A.}~\bibnamefont {{Morello}}}, \ and\ \bibinfo {author}
  {\bibfnamefont {A.~S.}\ \bibnamefont {{Dzurak}}},\ }\href {\doibase
  10.1038/nnano.2014.216} {\bibfield  {journal} {\bibinfo  {journal} {Nature
  Nanotechnology}\ }\textbf {\bibinfo {volume} {9}},\ \bibinfo {pages} {981}
  (\bibinfo {year} {2014})},\ \Eprint {http://arxiv.org/abs/1407.1950}
  {arXiv:1407.1950 [cond-mat.mes-hall]} \BibitemShut {NoStop}%
\bibitem [{\citenamefont {{Veldhorst}}\ \emph {et~al.}(2015)\citenamefont
  {{Veldhorst}}, \citenamefont {{Ruskov}}, \citenamefont {{Yang}},
  \citenamefont {{Hwang}}, \citenamefont {{Hudson}}, \citenamefont
  {{Flatt{\'e}}}, \citenamefont {{Tahan}}, \citenamefont {{Itoh}},
  \citenamefont {{Morello}},\ and\ \citenamefont {{Dzurak}}}]{Veldhorst2015}%
  \BibitemOpen
  \bibfield  {author} {\bibinfo {author} {\bibfnamefont {M.}~\bibnamefont
  {{Veldhorst}}}, \bibinfo {author} {\bibfnamefont {R.}~\bibnamefont
  {{Ruskov}}}, \bibinfo {author} {\bibfnamefont {C.~H.}\ \bibnamefont
  {{Yang}}}, \bibinfo {author} {\bibfnamefont {J.~C.~C.}\ \bibnamefont
  {{Hwang}}}, \bibinfo {author} {\bibfnamefont {F.~E.}\ \bibnamefont
  {{Hudson}}}, \bibinfo {author} {\bibfnamefont {M.~E.}\ \bibnamefont
  {{Flatt{\'e}}}}, \bibinfo {author} {\bibfnamefont {C.}~\bibnamefont
  {{Tahan}}}, \bibinfo {author} {\bibfnamefont {K.~M.}\ \bibnamefont {{Itoh}}},
  \bibinfo {author} {\bibfnamefont {A.}~\bibnamefont {{Morello}}}, \ and\
  \bibinfo {author} {\bibfnamefont {A.~S.}\ \bibnamefont {{Dzurak}}},\ }\href
  {\doibase 10.1103/PhysRevB.92.201401} {\bibfield  {journal} {\bibinfo
  {journal} {Physical Review B}\ }\textbf {\bibinfo {volume} {92}},\ \bibinfo
  {eid} {201401} (\bibinfo {year} {2015})},\ \Eprint
  {http://arxiv.org/abs/1505.01213} {arXiv:1505.01213 [cond-mat.mes-hall]}
  \BibitemShut {NoStop}%
\bibitem [{\citenamefont {{Barends}}\ \emph {et~al.}(2015)\citenamefont
  {{Barends}}, \citenamefont {{Lamata}}, \citenamefont {{Kelly}}, \citenamefont
  {{Garc{\'\i}a-{\'A}lvarez}}, \citenamefont {{Fowler}}, \citenamefont
  {{Megrant}}, \citenamefont {{Jeffrey}}, \citenamefont {{White}},
  \citenamefont {{Sank}}, \citenamefont {{Mutus}}, \citenamefont {{Campbell}},
  \citenamefont {{Chen}}, \citenamefont {{Chen}}, \citenamefont {{Chiaro}},
  \citenamefont {{Dunsworth}}, \citenamefont {{Hoi}}, \citenamefont {{Neill}},
  \citenamefont {{O'Malley}}, \citenamefont {{Quintana}}, \citenamefont
  {{Roushan}}, \citenamefont {{Vainsencher}}, \citenamefont {{Wenner}},
  \citenamefont {{Solano}},\ and\ \citenamefont {{Martinis}}}]{Barends2015}%
  \BibitemOpen
  \bibfield  {author} {\bibinfo {author} {\bibfnamefont {R.}~\bibnamefont
  {{Barends}}}, \bibinfo {author} {\bibfnamefont {L.}~\bibnamefont {{Lamata}}},
  \bibinfo {author} {\bibfnamefont {J.}~\bibnamefont {{Kelly}}}, \bibinfo
  {author} {\bibfnamefont {L.}~\bibnamefont {{Garc{\'\i}a-{\'A}lvarez}}},
  \bibinfo {author} {\bibfnamefont {A.~G.}\ \bibnamefont {{Fowler}}}, \bibinfo
  {author} {\bibfnamefont {A.}~\bibnamefont {{Megrant}}}, \bibinfo {author}
  {\bibfnamefont {E.}~\bibnamefont {{Jeffrey}}}, \bibinfo {author}
  {\bibfnamefont {T.~C.}\ \bibnamefont {{White}}}, \bibinfo {author}
  {\bibfnamefont {D.}~\bibnamefont {{Sank}}}, \bibinfo {author} {\bibfnamefont
  {J.~Y.}\ \bibnamefont {{Mutus}}}, \bibinfo {author} {\bibfnamefont
  {B.}~\bibnamefont {{Campbell}}}, \bibinfo {author} {\bibfnamefont
  {Y.}~\bibnamefont {{Chen}}}, \bibinfo {author} {\bibfnamefont
  {Z.}~\bibnamefont {{Chen}}}, \bibinfo {author} {\bibfnamefont
  {B.}~\bibnamefont {{Chiaro}}}, \bibinfo {author} {\bibfnamefont
  {A.}~\bibnamefont {{Dunsworth}}}, \bibinfo {author} {\bibfnamefont {I.~C.}\
  \bibnamefont {{Hoi}}}, \bibinfo {author} {\bibfnamefont {C.}~\bibnamefont
  {{Neill}}}, \bibinfo {author} {\bibfnamefont {P.~J.~J.}\ \bibnamefont
  {{O'Malley}}}, \bibinfo {author} {\bibfnamefont {C.}~\bibnamefont
  {{Quintana}}}, \bibinfo {author} {\bibfnamefont {P.}~\bibnamefont
  {{Roushan}}}, \bibinfo {author} {\bibfnamefont {A.}~\bibnamefont
  {{Vainsencher}}}, \bibinfo {author} {\bibfnamefont {J.}~\bibnamefont
  {{Wenner}}}, \bibinfo {author} {\bibfnamefont {E.}~\bibnamefont {{Solano}}},
  \ and\ \bibinfo {author} {\bibfnamefont {J.~M.}\ \bibnamefont {{Martinis}}},\
  }\href {\doibase 10.1038/ncomms8654} {\bibfield  {journal} {\bibinfo
  {journal} {Nature Communications}\ }\textbf {\bibinfo {volume} {6}},\
  \bibinfo {eid} {7654} (\bibinfo {year} {2015})},\ \Eprint
  {http://arxiv.org/abs/1501.07703} {arXiv:1501.07703 [quant-ph]} \BibitemShut
  {NoStop}%
\bibitem [{\citenamefont {{Takeda}}\ \emph {et~al.}(2016)\citenamefont
  {{Takeda}}, \citenamefont {{Kamioka}}, \citenamefont {{Otsuka}},
  \citenamefont {{Yoneda}}, \citenamefont {{Nakajima}}, \citenamefont
  {{Delbecq}}, \citenamefont {{Amaha}}, \citenamefont {{Allison}},
  \citenamefont {{Kodera}}, \citenamefont {{Oda}},\ and\ \citenamefont
  {{Tarucha}}}]{Takeda2016}%
  \BibitemOpen
  \bibfield  {author} {\bibinfo {author} {\bibfnamefont {K.}~\bibnamefont
  {{Takeda}}}, \bibinfo {author} {\bibfnamefont {J.}~\bibnamefont {{Kamioka}}},
  \bibinfo {author} {\bibfnamefont {T.}~\bibnamefont {{Otsuka}}}, \bibinfo
  {author} {\bibfnamefont {J.}~\bibnamefont {{Yoneda}}}, \bibinfo {author}
  {\bibfnamefont {T.}~\bibnamefont {{Nakajima}}}, \bibinfo {author}
  {\bibfnamefont {M.~R.}\ \bibnamefont {{Delbecq}}}, \bibinfo {author}
  {\bibfnamefont {S.}~\bibnamefont {{Amaha}}}, \bibinfo {author} {\bibfnamefont
  {G.}~\bibnamefont {{Allison}}}, \bibinfo {author} {\bibfnamefont
  {T.}~\bibnamefont {{Kodera}}}, \bibinfo {author} {\bibfnamefont
  {S.}~\bibnamefont {{Oda}}}, \ and\ \bibinfo {author} {\bibfnamefont
  {S.}~\bibnamefont {{Tarucha}}},\ }\href {\doibase 10.1126/sciadv.1600694}
  {\bibfield  {journal} {\bibinfo  {journal} {Science Advances}\ }\textbf
  {\bibinfo {volume} {2}},\ \bibinfo {pages} {e1600694} (\bibinfo {year}
  {2016})},\ \Eprint {http://arxiv.org/abs/1602.07833} {arXiv:1602.07833
  [cond-mat.mes-hall]} \BibitemShut {NoStop}%
\bibitem [{\citenamefont {{McKay}}\ \emph
  {et~al.}(2017{\natexlab{b}})\citenamefont {{McKay}}, \citenamefont {{Wood}},
  \citenamefont {{Sheldon}}, \citenamefont {{Chow}},\ and\ \citenamefont
  {{Gambetta}}}]{McKay2017b}%
  \BibitemOpen
  \bibfield  {author} {\bibinfo {author} {\bibfnamefont {D.~C.}\ \bibnamefont
  {{McKay}}}, \bibinfo {author} {\bibfnamefont {C.~J.}\ \bibnamefont {{Wood}}},
  \bibinfo {author} {\bibfnamefont {S.}~\bibnamefont {{Sheldon}}}, \bibinfo
  {author} {\bibfnamefont {J.~M.}\ \bibnamefont {{Chow}}}, \ and\ \bibinfo
  {author} {\bibfnamefont {J.~M.}\ \bibnamefont {{Gambetta}}},\ }\href
  {\doibase 10.1103/PhysRevA.96.022330} {\bibfield  {journal} {\bibinfo
  {journal} {Physical Review A}\ }\textbf {\bibinfo {volume} {96}},\ \bibinfo
  {eid} {022330} (\bibinfo {year} {2017}{\natexlab{b}})},\ \Eprint
  {http://arxiv.org/abs/1612.00858} {arXiv:1612.00858 [quant-ph]} \BibitemShut
  {NoStop}%
\bibitem [{\citenamefont {{Nichol}}\ \emph {et~al.}(2017)\citenamefont
  {{Nichol}}, \citenamefont {{Orona}}, \citenamefont {{Harvey}}, \citenamefont
  {{Fallahi}}, \citenamefont {{Gardner}}, \citenamefont {{Manfra}},\ and\
  \citenamefont {{Yacoby}}}]{Nichol2017}%
  \BibitemOpen
  \bibfield  {author} {\bibinfo {author} {\bibfnamefont {J.~M.}\ \bibnamefont
  {{Nichol}}}, \bibinfo {author} {\bibfnamefont {L.~A.}\ \bibnamefont
  {{Orona}}}, \bibinfo {author} {\bibfnamefont {S.~P.}\ \bibnamefont
  {{Harvey}}}, \bibinfo {author} {\bibfnamefont {S.}~\bibnamefont {{Fallahi}}},
  \bibinfo {author} {\bibfnamefont {G.~C.}\ \bibnamefont {{Gardner}}}, \bibinfo
  {author} {\bibfnamefont {M.~J.}\ \bibnamefont {{Manfra}}}, \ and\ \bibinfo
  {author} {\bibfnamefont {A.}~\bibnamefont {{Yacoby}}},\ }\href {\doibase
  10.1038/s41534-016-0003-1} {\bibfield  {journal} {\bibinfo  {journal} {npj
  Quantum Information}\ }\textbf {\bibinfo {volume} {3}},\ \bibinfo {eid} {3}
  (\bibinfo {year} {2017})},\ \Eprint {http://arxiv.org/abs/1608.04258}
  {arXiv:1608.04258 [cond-mat.mes-hall]} \BibitemShut {NoStop}%
\bibitem [{\citenamefont {{Chan}}\ \emph {et~al.}(2018)\citenamefont {{Chan}},
  \citenamefont {{Huang}}, \citenamefont {{Yang}}, \citenamefont {{Hwang}},
  \citenamefont {{Hensen}}, \citenamefont {{Tanttu}}, \citenamefont {{Hudson}},
  \citenamefont {{Itoh}}, \citenamefont {{Laucht}}, \citenamefont {{Morello}},\
  and\ \citenamefont {{Dzurak}}}]{Chan2018}%
  \BibitemOpen
  \bibfield  {author} {\bibinfo {author} {\bibfnamefont {K.~W.}\ \bibnamefont
  {{Chan}}}, \bibinfo {author} {\bibfnamefont {W.}~\bibnamefont {{Huang}}},
  \bibinfo {author} {\bibfnamefont {C.~H.}\ \bibnamefont {{Yang}}}, \bibinfo
  {author} {\bibfnamefont {J.~C.~C.}\ \bibnamefont {{Hwang}}}, \bibinfo
  {author} {\bibfnamefont {B.}~\bibnamefont {{Hensen}}}, \bibinfo {author}
  {\bibfnamefont {T.}~\bibnamefont {{Tanttu}}}, \bibinfo {author}
  {\bibfnamefont {F.~E.}\ \bibnamefont {{Hudson}}}, \bibinfo {author}
  {\bibfnamefont {K.~M.}\ \bibnamefont {{Itoh}}}, \bibinfo {author}
  {\bibfnamefont {A.}~\bibnamefont {{Laucht}}}, \bibinfo {author}
  {\bibfnamefont {A.}~\bibnamefont {{Morello}}}, \ and\ \bibinfo {author}
  {\bibfnamefont {A.~S.}\ \bibnamefont {{Dzurak}}},\ }\href {\doibase
  10.1103/PhysRevApplied.10.044017} {\bibfield  {journal} {\bibinfo  {journal}
  {Physical Review Applied}\ }\textbf {\bibinfo {volume} {10}},\ \bibinfo {eid}
  {044017} (\bibinfo {year} {2018})},\ \Eprint
  {http://arxiv.org/abs/1803.01609} {arXiv:1803.01609 [quant-ph]} \BibitemShut
  {NoStop}%
\bibitem [{\citenamefont {{Caldwell}}\ \emph {et~al.}(2018)\citenamefont
  {{Caldwell}}, \citenamefont {{Didier}}, \citenamefont {{Ryan}}, \citenamefont
  {{Sete}}, \citenamefont {{Hudson}}, \citenamefont {{Karalekas}},
  \citenamefont {{Manenti}}, \citenamefont {{da Silva}}, \citenamefont
  {{Sinclair}}, \citenamefont {{Acala}}, \citenamefont {{Alidoust}},
  \citenamefont {{Angeles}}, \citenamefont {{Bestwick}}, \citenamefont
  {{Block}}, \citenamefont {{Bloom}}, \citenamefont {{Bradley}}, \citenamefont
  {{Bui}}, \citenamefont {{Capelluto}}, \citenamefont {{Chilcott}},
  \citenamefont {{Cordova}}, \citenamefont {{Crossman}}, \citenamefont
  {{Curtis}}, \citenamefont {{Deshpande}}, \citenamefont {{Bouayadi}},
  \citenamefont {{Girshovich}}, \citenamefont {{Hong}}, \citenamefont
  {{Kuang}}, \citenamefont {{Lenihan}}, \citenamefont {{Manning}},
  \citenamefont {{Marchenkov}}, \citenamefont {{Marshall}}, \citenamefont
  {{Maydra}}, \citenamefont {{Mohan}}, \citenamefont {{O'Brien}}, \citenamefont
  {{Osborn}}, \citenamefont {{Otterbach}}, \citenamefont {{Papageorge}},
  \citenamefont {{Paquette}}, \citenamefont {{Pelstring}}, \citenamefont
  {{Polloreno}}, \citenamefont {{Prawiroatmodjo}}, \citenamefont {{Rawat}},
  \citenamefont {{Reagor}}, \citenamefont {{Renzas}}, \citenamefont {{Rubin}},
  \citenamefont {{Russell}}, \citenamefont {{Rust}}, \citenamefont
  {{Scarabelli}}, \citenamefont {{Scheer}}, \citenamefont {{Selvanayagam}},
  \citenamefont {{Smith}}, \citenamefont {{Staley}}, \citenamefont {{Suska}},
  \citenamefont {{Tezak}}, \citenamefont {{Thompson}}, \citenamefont {{To}},
  \citenamefont {{Vahidpour}}, \citenamefont {{Vodrahalli}}, \citenamefont
  {{Whyland}}, \citenamefont {{Yadav}}, \citenamefont {{Zeng}},\ and\
  \citenamefont {{Rigetti}}}]{Caldwell2018}%
  \BibitemOpen
  \bibfield  {author} {\bibinfo {author} {\bibfnamefont {S.~A.}\ \bibnamefont
  {{Caldwell}}}, \bibinfo {author} {\bibfnamefont {N.}~\bibnamefont
  {{Didier}}}, \bibinfo {author} {\bibfnamefont {C.~A.}\ \bibnamefont
  {{Ryan}}}, \bibinfo {author} {\bibfnamefont {E.~A.}\ \bibnamefont {{Sete}}},
  \bibinfo {author} {\bibfnamefont {A.}~\bibnamefont {{Hudson}}}, \bibinfo
  {author} {\bibfnamefont {P.}~\bibnamefont {{Karalekas}}}, \bibinfo {author}
  {\bibfnamefont {R.}~\bibnamefont {{Manenti}}}, \bibinfo {author}
  {\bibfnamefont {M.~P.}\ \bibnamefont {{da Silva}}}, \bibinfo {author}
  {\bibfnamefont {R.}~\bibnamefont {{Sinclair}}}, \bibinfo {author}
  {\bibfnamefont {E.}~\bibnamefont {{Acala}}}, \bibinfo {author} {\bibfnamefont
  {N.}~\bibnamefont {{Alidoust}}}, \bibinfo {author} {\bibfnamefont
  {J.}~\bibnamefont {{Angeles}}}, \bibinfo {author} {\bibfnamefont
  {A.}~\bibnamefont {{Bestwick}}}, \bibinfo {author} {\bibfnamefont
  {M.}~\bibnamefont {{Block}}}, \bibinfo {author} {\bibfnamefont
  {B.}~\bibnamefont {{Bloom}}}, \bibinfo {author} {\bibfnamefont
  {A.}~\bibnamefont {{Bradley}}}, \bibinfo {author} {\bibfnamefont
  {C.}~\bibnamefont {{Bui}}}, \bibinfo {author} {\bibfnamefont
  {L.}~\bibnamefont {{Capelluto}}}, \bibinfo {author} {\bibfnamefont
  {R.}~\bibnamefont {{Chilcott}}}, \bibinfo {author} {\bibfnamefont
  {J.}~\bibnamefont {{Cordova}}}, \bibinfo {author} {\bibfnamefont
  {G.}~\bibnamefont {{Crossman}}}, \bibinfo {author} {\bibfnamefont
  {M.}~\bibnamefont {{Curtis}}}, \bibinfo {author} {\bibfnamefont
  {S.}~\bibnamefont {{Deshpande}}}, \bibinfo {author} {\bibfnamefont {T.~E.}\
  \bibnamefont {{Bouayadi}}}, \bibinfo {author} {\bibfnamefont
  {D.}~\bibnamefont {{Girshovich}}}, \bibinfo {author} {\bibfnamefont
  {S.}~\bibnamefont {{Hong}}}, \bibinfo {author} {\bibfnamefont
  {K.}~\bibnamefont {{Kuang}}}, \bibinfo {author} {\bibfnamefont
  {M.}~\bibnamefont {{Lenihan}}}, \bibinfo {author} {\bibfnamefont
  {T.}~\bibnamefont {{Manning}}}, \bibinfo {author} {\bibfnamefont
  {A.}~\bibnamefont {{Marchenkov}}}, \bibinfo {author} {\bibfnamefont
  {J.}~\bibnamefont {{Marshall}}}, \bibinfo {author} {\bibfnamefont
  {R.}~\bibnamefont {{Maydra}}}, \bibinfo {author} {\bibfnamefont
  {Y.}~\bibnamefont {{Mohan}}}, \bibinfo {author} {\bibfnamefont
  {W.}~\bibnamefont {{O'Brien}}}, \bibinfo {author} {\bibfnamefont
  {C.}~\bibnamefont {{Osborn}}}, \bibinfo {author} {\bibfnamefont
  {J.}~\bibnamefont {{Otterbach}}}, \bibinfo {author} {\bibfnamefont
  {A.}~\bibnamefont {{Papageorge}}}, \bibinfo {author} {\bibfnamefont {J.~P.}\
  \bibnamefont {{Paquette}}}, \bibinfo {author} {\bibfnamefont
  {M.}~\bibnamefont {{Pelstring}}}, \bibinfo {author} {\bibfnamefont
  {A.}~\bibnamefont {{Polloreno}}}, \bibinfo {author} {\bibfnamefont
  {G.}~\bibnamefont {{Prawiroatmodjo}}}, \bibinfo {author} {\bibfnamefont
  {V.}~\bibnamefont {{Rawat}}}, \bibinfo {author} {\bibfnamefont
  {M.}~\bibnamefont {{Reagor}}}, \bibinfo {author} {\bibfnamefont
  {R.}~\bibnamefont {{Renzas}}}, \bibinfo {author} {\bibfnamefont
  {N.}~\bibnamefont {{Rubin}}}, \bibinfo {author} {\bibfnamefont
  {D.}~\bibnamefont {{Russell}}}, \bibinfo {author} {\bibfnamefont
  {M.}~\bibnamefont {{Rust}}}, \bibinfo {author} {\bibfnamefont
  {D.}~\bibnamefont {{Scarabelli}}}, \bibinfo {author} {\bibfnamefont
  {M.}~\bibnamefont {{Scheer}}}, \bibinfo {author} {\bibfnamefont
  {M.}~\bibnamefont {{Selvanayagam}}}, \bibinfo {author} {\bibfnamefont
  {R.}~\bibnamefont {{Smith}}}, \bibinfo {author} {\bibfnamefont
  {A.}~\bibnamefont {{Staley}}}, \bibinfo {author} {\bibfnamefont
  {M.}~\bibnamefont {{Suska}}}, \bibinfo {author} {\bibfnamefont
  {N.}~\bibnamefont {{Tezak}}}, \bibinfo {author} {\bibfnamefont {D.~C.}\
  \bibnamefont {{Thompson}}}, \bibinfo {author} {\bibfnamefont {T.~W.}\
  \bibnamefont {{To}}}, \bibinfo {author} {\bibfnamefont {M.}~\bibnamefont
  {{Vahidpour}}}, \bibinfo {author} {\bibfnamefont {N.}~\bibnamefont
  {{Vodrahalli}}}, \bibinfo {author} {\bibfnamefont {T.}~\bibnamefont
  {{Whyland}}}, \bibinfo {author} {\bibfnamefont {K.}~\bibnamefont {{Yadav}}},
  \bibinfo {author} {\bibfnamefont {W.}~\bibnamefont {{Zeng}}}, \ and\ \bibinfo
  {author} {\bibfnamefont {C.}~\bibnamefont {{Rigetti}}},\ }\href {\doibase
  10.1103/PhysRevApplied.10.034050} {\bibfield  {journal} {\bibinfo  {journal}
  {Physical Review Applied}\ }\textbf {\bibinfo {volume} {10}},\ \bibinfo {eid}
  {034050} (\bibinfo {year} {2018})},\ \Eprint
  {http://arxiv.org/abs/1706.06562} {arXiv:1706.06562 [quant-ph]} \BibitemShut
  {NoStop}%
\bibitem [{\citenamefont {{Wang}}\ \emph
  {et~al.}(2018{\natexlab{a}})\citenamefont {{Wang}}, \citenamefont {{Zhang}},
  \citenamefont {{Xiang}}, \citenamefont {{Jia}}, \citenamefont {{Duan}},
  \citenamefont {{Cai}}, \citenamefont {{Gong}}, \citenamefont {{Zong}},
  \citenamefont {{Wu}}, \citenamefont {{Wu}}, \citenamefont {{Sun}},
  \citenamefont {{Yin}},\ and\ \citenamefont {{Guo}}}]{Wang2018}%
  \BibitemOpen
  \bibfield  {author} {\bibinfo {author} {\bibfnamefont {T.}~\bibnamefont
  {{Wang}}}, \bibinfo {author} {\bibfnamefont {Z.}~\bibnamefont {{Zhang}}},
  \bibinfo {author} {\bibfnamefont {L.}~\bibnamefont {{Xiang}}}, \bibinfo
  {author} {\bibfnamefont {Z.}~\bibnamefont {{Jia}}}, \bibinfo {author}
  {\bibfnamefont {P.}~\bibnamefont {{Duan}}}, \bibinfo {author} {\bibfnamefont
  {W.}~\bibnamefont {{Cai}}}, \bibinfo {author} {\bibfnamefont
  {Z.}~\bibnamefont {{Gong}}}, \bibinfo {author} {\bibfnamefont
  {Z.}~\bibnamefont {{Zong}}}, \bibinfo {author} {\bibfnamefont
  {M.}~\bibnamefont {{Wu}}}, \bibinfo {author} {\bibfnamefont {J.}~\bibnamefont
  {{Wu}}}, \bibinfo {author} {\bibfnamefont {L.}~\bibnamefont {{Sun}}},
  \bibinfo {author} {\bibfnamefont {Y.}~\bibnamefont {{Yin}}}, \ and\ \bibinfo
  {author} {\bibfnamefont {G.}~\bibnamefont {{Guo}}},\ }\href {\doibase
  10.1088/1367-2630/aac9e7} {\bibfield  {journal} {\bibinfo  {journal} {New
  Journal of Physics}\ }\textbf {\bibinfo {volume} {20}},\ \bibinfo {eid}
  {065003} (\bibinfo {year} {2018}{\natexlab{a}})},\ \Eprint
  {http://arxiv.org/abs/1804.08247} {arXiv:1804.08247 [quant-ph]} \BibitemShut
  {NoStop}%
\bibitem [{\citenamefont {{Wang}}\ \emph
  {et~al.}(2018{\natexlab{b}})\citenamefont {{Wang}}, \citenamefont {{Zhang}},
  \citenamefont {{Xiang}}, \citenamefont {{Jia}}, \citenamefont {{Duan}},
  \citenamefont {{Zong}}, \citenamefont {{Sun}}, \citenamefont {{Dong}},
  \citenamefont {{Wu}}, \citenamefont {{Yin}},\ and\ \citenamefont
  {{Guo}}}]{Wang2018b}%
  \BibitemOpen
  \bibfield  {author} {\bibinfo {author} {\bibfnamefont {T.}~\bibnamefont
  {{Wang}}}, \bibinfo {author} {\bibfnamefont {Z.}~\bibnamefont {{Zhang}}},
  \bibinfo {author} {\bibfnamefont {L.}~\bibnamefont {{Xiang}}}, \bibinfo
  {author} {\bibfnamefont {Z.}~\bibnamefont {{Jia}}}, \bibinfo {author}
  {\bibfnamefont {P.}~\bibnamefont {{Duan}}}, \bibinfo {author} {\bibfnamefont
  {Z.}~\bibnamefont {{Zong}}}, \bibinfo {author} {\bibfnamefont
  {Z.}~\bibnamefont {{Sun}}}, \bibinfo {author} {\bibfnamefont
  {Z.}~\bibnamefont {{Dong}}}, \bibinfo {author} {\bibfnamefont
  {J.}~\bibnamefont {{Wu}}}, \bibinfo {author} {\bibfnamefont {Y.}~\bibnamefont
  {{Yin}}}, \ and\ \bibinfo {author} {\bibfnamefont {G.}~\bibnamefont
  {{Guo}}},\ }\href@noop {} {\bibfield  {journal} {\bibinfo  {journal} {arXiv
  e-prints}\ ,\ \bibinfo {eid} {arXiv:1811.08096}} (\bibinfo {year}
  {2018}{\natexlab{b}})},\ \Eprint {http://arxiv.org/abs/1811.08096}
  {arXiv:1811.08096 [quant-ph]} \BibitemShut {NoStop}%
\bibitem [{\citenamefont {{Yoneda}}\ \emph {et~al.}(2018)\citenamefont
  {{Yoneda}}, \citenamefont {{Takeda}}, \citenamefont {{Otsuka}}, \citenamefont
  {{Nakajima}}, \citenamefont {{Delbecq}}, \citenamefont {{Allison}},
  \citenamefont {{Honda}}, \citenamefont {{Kodera}}, \citenamefont {{Oda}},
  \citenamefont {{Hoshi}}, \citenamefont {{Usami}}, \citenamefont {{Itoh}},\
  and\ \citenamefont {{Tarucha}}}]{Yoneda2018}%
  \BibitemOpen
  \bibfield  {author} {\bibinfo {author} {\bibfnamefont {J.}~\bibnamefont
  {{Yoneda}}}, \bibinfo {author} {\bibfnamefont {K.}~\bibnamefont {{Takeda}}},
  \bibinfo {author} {\bibfnamefont {T.}~\bibnamefont {{Otsuka}}}, \bibinfo
  {author} {\bibfnamefont {T.}~\bibnamefont {{Nakajima}}}, \bibinfo {author}
  {\bibfnamefont {M.~R.}\ \bibnamefont {{Delbecq}}}, \bibinfo {author}
  {\bibfnamefont {G.}~\bibnamefont {{Allison}}}, \bibinfo {author}
  {\bibfnamefont {T.}~\bibnamefont {{Honda}}}, \bibinfo {author} {\bibfnamefont
  {T.}~\bibnamefont {{Kodera}}}, \bibinfo {author} {\bibfnamefont
  {S.}~\bibnamefont {{Oda}}}, \bibinfo {author} {\bibfnamefont
  {Y.}~\bibnamefont {{Hoshi}}}, \bibinfo {author} {\bibfnamefont
  {N.}~\bibnamefont {{Usami}}}, \bibinfo {author} {\bibfnamefont {K.~M.}\
  \bibnamefont {{Itoh}}}, \ and\ \bibinfo {author} {\bibfnamefont
  {S.}~\bibnamefont {{Tarucha}}},\ }\href {\doibase 10.1038/s41565-017-0014-x}
  {\bibfield  {journal} {\bibinfo  {journal} {Nature Nanotechnology}\ }\textbf
  {\bibinfo {volume} {13}},\ \bibinfo {pages} {102} (\bibinfo {year} {2018})},\
  \Eprint {http://arxiv.org/abs/1708.01454} {arXiv:1708.01454
  [cond-mat.mes-hall]} \BibitemShut {NoStop}%
\bibitem [{\citenamefont {{Zhang}}\ \emph {et~al.}(2018)\citenamefont
  {{Zhang}}, \citenamefont {{Zhao}}, \citenamefont {{Wang}}, \citenamefont
  {{Xiang}}, \citenamefont {{Jia}}, \citenamefont {{Duan}}, \citenamefont
  {{Tong}}, \citenamefont {{Yin}},\ and\ \citenamefont {{Guo}}}]{Zhang2018}%
  \BibitemOpen
  \bibfield  {author} {\bibinfo {author} {\bibfnamefont {Z.}~\bibnamefont
  {{Zhang}}}, \bibinfo {author} {\bibfnamefont {P.~Z.}\ \bibnamefont {{Zhao}}},
  \bibinfo {author} {\bibfnamefont {T.}~\bibnamefont {{Wang}}}, \bibinfo
  {author} {\bibfnamefont {L.}~\bibnamefont {{Xiang}}}, \bibinfo {author}
  {\bibfnamefont {Z.}~\bibnamefont {{Jia}}}, \bibinfo {author} {\bibfnamefont
  {P.}~\bibnamefont {{Duan}}}, \bibinfo {author} {\bibfnamefont {D.~M.}\
  \bibnamefont {{Tong}}}, \bibinfo {author} {\bibfnamefont {Y.}~\bibnamefont
  {{Yin}}}, \ and\ \bibinfo {author} {\bibfnamefont {G.}~\bibnamefont
  {{Guo}}},\ }\href@noop {} {\bibfield  {journal} {\bibinfo  {journal} {arXiv
  e-prints}\ ,\ \bibinfo {eid} {arXiv:1811.06252}} (\bibinfo {year} {2018})},\
  \Eprint {http://arxiv.org/abs/1811.06252} {arXiv:1811.06252 [quant-ph]}
  \BibitemShut {NoStop}%
\bibitem [{\citenamefont {{Carignan-Dugas}}\ \emph {et~al.}(2015)\citenamefont
  {{Carignan-Dugas}}, \citenamefont {{Wallman}},\ and\ \citenamefont
  {{Emerson}}}]{Dugas2015}%
  \BibitemOpen
  \bibfield  {author} {\bibinfo {author} {\bibfnamefont {A.}~\bibnamefont
  {{Carignan-Dugas}}}, \bibinfo {author} {\bibfnamefont {J.~J.}\ \bibnamefont
  {{Wallman}}}, \ and\ \bibinfo {author} {\bibfnamefont {J.}~\bibnamefont
  {{Emerson}}},\ }\href {\doibase 10.1103/PhysRevA.92.060302} {\bibfield
  {journal} {\bibinfo  {journal} {Physical Review A}\ }\textbf {\bibinfo
  {volume} {92}},\ \bibinfo {eid} {060302} (\bibinfo {year} {2015})},\ \Eprint
  {http://arxiv.org/abs/1508.06312} {arXiv:1508.06312 [quant-ph]} \BibitemShut
  {NoStop}%
\bibitem [{\citenamefont {{Cross}}\ \emph {et~al.}(2016)\citenamefont
  {{Cross}}, \citenamefont {{Magesan}}, \citenamefont {{Bishop}}, \citenamefont
  {{Smolin}},\ and\ \citenamefont {{Gambetta}}}]{Cross2016}%
  \BibitemOpen
  \bibfield  {author} {\bibinfo {author} {\bibfnamefont {A.~W.}\ \bibnamefont
  {{Cross}}}, \bibinfo {author} {\bibfnamefont {E.}~\bibnamefont {{Magesan}}},
  \bibinfo {author} {\bibfnamefont {L.~S.}\ \bibnamefont {{Bishop}}}, \bibinfo
  {author} {\bibfnamefont {J.~A.}\ \bibnamefont {{Smolin}}}, \ and\ \bibinfo
  {author} {\bibfnamefont {J.~M.}\ \bibnamefont {{Gambetta}}},\ }\href
  {\doibase 10.1038/npjqi.2016.12} {\bibfield  {journal} {\bibinfo  {journal}
  {npj Quantum Information}\ }\textbf {\bibinfo {volume} {2}},\ \bibinfo {eid}
  {16012} (\bibinfo {year} {2016})},\ \Eprint {http://arxiv.org/abs/1510.02720}
  {arXiv:1510.02720 [quant-ph]} \BibitemShut {NoStop}%
\bibitem [{\citenamefont {{Harper}}\ and\ \citenamefont
  {{Flammia}}(2017)}]{Harper2017}%
  \BibitemOpen
  \bibfield  {author} {\bibinfo {author} {\bibfnamefont {R.}~\bibnamefont
  {{Harper}}}\ and\ \bibinfo {author} {\bibfnamefont {S.~T.}\ \bibnamefont
  {{Flammia}}},\ }\href {\doibase 10.1088/2058-9565/aa5f8d} {\bibfield
  {journal} {\bibinfo  {journal} {Quantum Science and Technology}\ }\textbf
  {\bibinfo {volume} {2}},\ \bibinfo {pages} {015008} (\bibinfo {year}
  {2017})},\ \Eprint {http://arxiv.org/abs/1608.02943} {arXiv:1608.02943
  [quant-ph]} \BibitemShut {NoStop}%
\bibitem [{\citenamefont {{Proctor}}\ \emph {et~al.}(2018)\citenamefont
  {{Proctor}}, \citenamefont {{Carignan-Dugas}}, \citenamefont {{Rudinger}},
  \citenamefont {{Nielsen}}, \citenamefont {{Blume-Kohout}},\ and\
  \citenamefont {{Young}}}]{Proctor2018}%
  \BibitemOpen
  \bibfield  {author} {\bibinfo {author} {\bibfnamefont {T.~J.}\ \bibnamefont
  {{Proctor}}}, \bibinfo {author} {\bibfnamefont {A.}~\bibnamefont
  {{Carignan-Dugas}}}, \bibinfo {author} {\bibfnamefont {K.}~\bibnamefont
  {{Rudinger}}}, \bibinfo {author} {\bibfnamefont {E.}~\bibnamefont
  {{Nielsen}}}, \bibinfo {author} {\bibfnamefont {R.}~\bibnamefont
  {{Blume-Kohout}}}, \ and\ \bibinfo {author} {\bibfnamefont {K.}~\bibnamefont
  {{Young}}},\ }\href@noop {} {\bibfield  {journal} {\bibinfo  {journal} {arXiv
  e-prints}\ ,\ \bibinfo {eid} {arXiv:1807.07975}} (\bibinfo {year} {2018})},\
  \Eprint {http://arxiv.org/abs/1807.07975} {arXiv:1807.07975 [quant-ph]}
  \BibitemShut {NoStop}%
\bibitem [{\citenamefont {{Carignan-Dugas}}()}]{Dugas2018gcfgen}%
  \BibitemOpen
  \bibfield  {author} {\bibinfo {author} {\bibfnamefont {A.}~\bibnamefont
  {{Carignan-Dugas}}},\ }\href@noop {} {\bibinfo  {journal} {Extended analysis
  of RB-type protocols under gate-dependent noise models (Upcoming work)}\
  }\BibitemShut {NoStop}%
\bibitem [{\citenamefont {{Xue}}\ \emph {et~al.}(2018)\citenamefont {{Xue}},
  \citenamefont {{Watson}}, \citenamefont {{Helsen}}, \citenamefont {{Ward}},
  \citenamefont {{Savage}}, \citenamefont {{Lagally}}, \citenamefont
  {{Coppersmith}}, \citenamefont {{Eriksson}}, \citenamefont {{Wehner}},\ and\
  \citenamefont {{Vandersypen}}}]{Xue2018}%
  \BibitemOpen
\bibfield  {journal} {  }\bibfield  {author} {\bibinfo {author} {\bibfnamefont
  {X.}~\bibnamefont {{Xue}}}, \bibinfo {author} {\bibfnamefont {T.~F.}\
  \bibnamefont {{Watson}}}, \bibinfo {author} {\bibfnamefont {J.}~\bibnamefont
  {{Helsen}}}, \bibinfo {author} {\bibfnamefont {D.~R.}\ \bibnamefont
  {{Ward}}}, \bibinfo {author} {\bibfnamefont {D.~E.}\ \bibnamefont
  {{Savage}}}, \bibinfo {author} {\bibfnamefont {M.~G.}\ \bibnamefont
  {{Lagally}}}, \bibinfo {author} {\bibfnamefont {S.~N.}\ \bibnamefont
  {{Coppersmith}}}, \bibinfo {author} {\bibfnamefont {M.~A.}\ \bibnamefont
  {{Eriksson}}}, \bibinfo {author} {\bibfnamefont {S.}~\bibnamefont
  {{Wehner}}}, \ and\ \bibinfo {author} {\bibfnamefont {L.~M.~K.}\ \bibnamefont
  {{Vandersypen}}},\ }\href@noop {} {\bibfield  {journal} {\bibinfo  {journal}
  {arXiv e-prints}\ ,\ \bibinfo {eid} {arXiv:1811.04002}} (\bibinfo {year}
  {2018})},\ \Eprint {http://arxiv.org/abs/1811.04002} {arXiv:1811.04002
  [quant-ph]} \BibitemShut {NoStop}%
\bibitem [{\citenamefont {{Yang}}\ \emph {et~al.}(2018)\citenamefont {{Yang}},
  \citenamefont {{Chan}}, \citenamefont {{Harper}}, \citenamefont {{Huang}},
  \citenamefont {{Evans}}, \citenamefont {{Hwang}}, \citenamefont {{Hensen}},
  \citenamefont {{Laucht}}, \citenamefont {{Tanttu}}, \citenamefont {{Hudson}},
  \citenamefont {{Flammia}}, \citenamefont {{Itoh}}, \citenamefont {{Morello}},
  \citenamefont {{Bartlett}},\ and\ \citenamefont {{Dzurak}}}]{Yang2018}%
  \BibitemOpen
  \bibfield  {author} {\bibinfo {author} {\bibfnamefont {C.~H.}\ \bibnamefont
  {{Yang}}}, \bibinfo {author} {\bibfnamefont {K.~W.}\ \bibnamefont {{Chan}}},
  \bibinfo {author} {\bibfnamefont {R.}~\bibnamefont {{Harper}}}, \bibinfo
  {author} {\bibfnamefont {W.}~\bibnamefont {{Huang}}}, \bibinfo {author}
  {\bibfnamefont {T.}~\bibnamefont {{Evans}}}, \bibinfo {author} {\bibfnamefont
  {J.~C.~C.}\ \bibnamefont {{Hwang}}}, \bibinfo {author} {\bibfnamefont
  {B.}~\bibnamefont {{Hensen}}}, \bibinfo {author} {\bibfnamefont
  {A.}~\bibnamefont {{Laucht}}}, \bibinfo {author} {\bibfnamefont
  {T.}~\bibnamefont {{Tanttu}}}, \bibinfo {author} {\bibfnamefont {F.~E.}\
  \bibnamefont {{Hudson}}}, \bibinfo {author} {\bibfnamefont {S.~T.}\
  \bibnamefont {{Flammia}}}, \bibinfo {author} {\bibfnamefont {K.~M.}\
  \bibnamefont {{Itoh}}}, \bibinfo {author} {\bibfnamefont {A.}~\bibnamefont
  {{Morello}}}, \bibinfo {author} {\bibfnamefont {S.~D.}\ \bibnamefont
  {{Bartlett}}}, \ and\ \bibinfo {author} {\bibfnamefont {A.~S.}\ \bibnamefont
  {{Dzurak}}},\ }\href@noop {} {\bibfield  {journal} {\bibinfo  {journal}
  {arXiv e-prints}\ ,\ \bibinfo {eid} {arXiv:1807.09500}} (\bibinfo {year}
  {2018})},\ \Eprint {http://arxiv.org/abs/1807.09500} {arXiv:1807.09500
  [cond-mat.mes-hall]} \BibitemShut {NoStop}%
\bibitem [{\citenamefont {Kimmel}\ \emph {et~al.}(2014)\citenamefont {Kimmel},
  \citenamefont {da~Silva}, \citenamefont {Ryan}, \citenamefont {Johnson},\
  and\ \citenamefont {Ohki}}]{Kimmel2014}%
  \BibitemOpen
  \bibfield  {author} {\bibinfo {author} {\bibfnamefont {S.}~\bibnamefont
  {Kimmel}}, \bibinfo {author} {\bibfnamefont {M.~P.}\ \bibnamefont
  {da~Silva}}, \bibinfo {author} {\bibfnamefont {C.~a.}\ \bibnamefont {Ryan}},
  \bibinfo {author} {\bibfnamefont {B.~R.}\ \bibnamefont {Johnson}}, \ and\
  \bibinfo {author} {\bibfnamefont {T.~a.}\ \bibnamefont {Ohki}},\ }\href
  {\doibase 10.1103/PhysRevX.4.011050} {\bibfield  {journal} {\bibinfo
  {journal} {Physical Review X}\ }\textbf {\bibinfo {volume} {4}},\ \bibinfo
  {pages} {011050} (\bibinfo {year} {2014})}\BibitemShut {NoStop}%
\bibitem [{\citenamefont {{P{\'e}rez-Garc{\'{\i}}a}}\ \emph
  {et~al.}(2006)\citenamefont {{P{\'e}rez-Garc{\'{\i}}a}}, \citenamefont
  {{Wolf}}, \citenamefont {{Petz}},\ and\ \citenamefont
  {{Ruskai}}}]{GarciaPerez2006}%
  \BibitemOpen
  \bibfield  {author} {\bibinfo {author} {\bibfnamefont {D.}~\bibnamefont
  {{P{\'e}rez-Garc{\'{\i}}a}}}, \bibinfo {author} {\bibfnamefont {M.~M.}\
  \bibnamefont {{Wolf}}}, \bibinfo {author} {\bibfnamefont {D.}~\bibnamefont
  {{Petz}}}, \ and\ \bibinfo {author} {\bibfnamefont {M.~B.}\ \bibnamefont
  {{Ruskai}}},\ }\href {\doibase 10.1063/1.2218675} {\bibfield  {journal}
  {\bibinfo  {journal} {Journal of Mathematical Physics}\ }\textbf {\bibinfo
  {volume} {47}},\ \bibinfo {pages} {083506} (\bibinfo {year} {2006})},\
  \Eprint {http://arxiv.org/abs/math-ph/0601063} {math-ph/0601063} \BibitemShut
  {NoStop}%
\bibitem [{\citenamefont {Bhatia}(1997)}]{Bhatia1997}%
  \BibitemOpen
  \bibfield  {author} {\bibinfo {author} {\bibfnamefont {R.}~\bibnamefont
  {Bhatia}},\ }\href@noop {} {\emph {\bibinfo {title} {{Matrix Analysis}}}}\
  (\bibinfo  {publisher} {Springer},\ \bibinfo {year} {1997})\BibitemShut
  {NoStop}%
\end{thebibliography}%
\end{document}